\newcommand{\lv}[1]{#1}
\newcommand{\sv}[1]{}
\def\NAT@spacechar{~}
\newtheorem{theorem}{Theorem}
\newtheorem{observation}{Observation}
\newtheorem{corollary}{Corollary}
\newtheorem{lemma}{Lemma}
\newtheorem{rrule}{Rule}
\newtheorem{proposition}{Proposition}
\theoremstyle{definition}
\newtheorem{definition}{Definition}
\newtheorem*{namedthm}{\namedthmname}
\newcounter{namedthm}
\Crefname{reduction}{Reduction}{Reductions}
\Crefname{rrule}{Rule}{Rules}
\Crefname{corollary}{Corollary}{Corollaries}
\Crefname{observation}{Observation}{Observations}
\Crefname{proposition}{Proposition}{Propositions}
\crefname{cond}{Condition}{Conditions}
\renewcommand{\P}{\text{\normalfont P}}
\newcommand{\NP}{\text{\normalfont NP}}
\newcommand{\APX}{\text{\normalfont APX}}
\newcommand{\FPT}{\text{\normalfont FPT}}
\newcommand{\PTAS}{\text{\normalfont PTAS}\xspace}
\newcommand{\ie}{{\em i.\,e.}}
\DeclareMathOperator{\opt}{opt}
\newcommand{\NCP}{\textsc{Non-Collinear Packing}\xspace}
\newcommand{\GPS}{\textsc{General Position Subset Selection}\xspace}
\newcommand{\optGPS}{\textsc{Maximum General Position Subset Selection}\xspace}
\newcommand{\IS}{\textsc{Independent Set}\xspace}
\newcommand{\VC}{\textsc{Vertex Cover}\xspace}
\newcommand{\PLC}{\textsc{Point Line Cover}\xspace}
\newcommand{\HS}[1]{$#1$-\textsc{Hitting Set}\xspace}
\newcommand{\R}{\mathds{R}}
\newcommand{\N}{\mathds{N}}
\newcommand{\Q}{\mathds{Q}}
\newcommand{\coNPinNPpoly}{\text{coNP}\subseteq\text{NP/poly}}
\newcommand{\problemdef}[3]{
	\begin{center}
  \begin{minipage}{0.95\textwidth}
    \noindent
    \textsc{#1}

			\vspace{2pt}
			\setlength{\tabcolsep}{3pt}
			\begin{tabularx}{\textwidth}{@{}lX@{}}
					\textbf{Input:} 		& #2 \\
					\textbf{Question:} 	& #3
				\end{tabularx}
  \end{minipage}
	\end{center}
}
\newcommand\footnoteref[1]{\protected@xdef\@thefnmark{\ref{#1}}\@footnotemark}
\begin{document}

\title{Finding Points in General Position}
\author[1]{Vincent Froese}

\author[2]{Iyad Kanj\footnote{Supported by the DFG project DAPA (NI 369/12) during a Mercator fellowship when staying at TU~Berlin.}}

\author[1]{André Nichterlein}

\author[1]{Rolf Niedermeier}

\affil[1]{Institut f\"ur Softwaretechnik und Theoretische Informatik, TU Berlin, Germany,\newline \texttt{\{vincent.froese, andre.nichterlein, rolf.niedermeier\}@tu-berlin.de}}

\affil[2]{School of Computing, DePaul University, Chicago, USA,\newline \texttt{ikanj@cs.depaul.edu}}

\date{}

\maketitle

\begin{abstract}
	We study the \GPS problem: Given a set of points in the plane, find a maximum-cardinality subset of points in general position.
	We prove that \GPS is NP-hard, APX-hard, and present several fixed-parameter tractability results for the problem as well as a subexponential running time lower bound based on the Exponential Time Hypothesis.
\end{abstract}

\section{Introduction}
For a set~$P=\{p_1,\ldots,p_n\}$ of~$n$ points in the plane, a subset~$S\subseteq P$ is in \emph{general position}
if no three points in~$S$ are \emph{collinear} (that is, lie on the same
line). A frequent assumption for point set problems in computational
geometry is that the given point set is in general position.
In this work, we consider the problem of computing a maximum-cardinality subset of points in general position from a given set of points.
This problem has received quite some attention from the combinatorial geometry perspective, but
it was hardly considered from the computational complexity perspective.
In particular, to the best of our knowledge, the classical complexity of the aforementioned problem until now was unknown.
Formally, the decision version of the problem is as follows:

\noindent \problemdef{\GPS}
{A set~$P$ of points in the plane and $k\in\N$.}
{Is there a subset~$S\subseteq P$ in general position of cardinality at least~$k$?}

A well-known special case of \GPS, referred to as the \textsc{No-Three-In-Line} problem, asks to place
a maximum number of points in general position on an $n\times n$-grid. Since at most two points can be placed on any grid-line, the maximum number of points in general position
that can be placed on an $n \times n$-grid is at most $2n$. Indeed, only for small~$n$ it is
known that $2n$~points can always be placed on the $n\times n$-grid. Erd\H{o}s~\cite{Rot51} observed that, for sufficiently large $n$,
one can place $(1-\epsilon)n$ points in general position on the $n\times n$-grid, for any $\epsilon > 0$. This lower bound was improved by Hall et al.~\cite{HJSW75} to
$(\frac{3}{2}-\epsilon)n$. It was conjectured by Guy and Kelly~\cite{MR68} that, for sufficiently large $n$, one can place more than $\frac{\pi}{\sqrt{3}}n$~points in general position on
an $n\times n$-grid. This conjecture remains unresolved, hinting at the challenging combinatorial nature of \textsc{No-Three-In-Line}, and hence of \GPS as well.

A problem closely related to \GPS is \textsc{Point Line Cover}:
Given a point set in the plane, find a minimum-cardinality set of
lines, the size of which is called the
\emph{line cover number}, that cover all points.
Interestingly, the size of a maximum subset in general position is related to the line cover number  (see \Cref{obs:line}).
While \textsc{Point Line Cover} has been intensively studied,
we aim to fill the gap for \GPS by providing
both computational hardness and fixed-parameter tractability
results for the problem.
In doing so, we particularly consider the parameters
solution size~$k$ (size of the sought subset in general position) and its dual $h:=n-k$, and investigate
their impact on the computational complexity of \GPS{}.

\paragraph*{Related Work}
\citet{PW13} provide lower bounds on the size of a point set in general position, a question originally studied by \citet{Erdos86}.
In his Master's thesis, \citet{Cao12} gives a problem kernel of~$O(k^4)$ points for
\GPS (there called \NCP problem) and a simple greedy
$O(\sqrt{\opt})$-factor approximation algorithm for the maximization version.
He also presents an Integer Linear Program formulation and shows
that it is in fact the dual of an Integer Linear Program
formulation for \PLC.
As to results for the much more studied \PLC, we refer to~\citet{KPR16} and the work cited therein.
Also, the problem of deciding whether any three points of a given point set are collinear has been studied recently~\cite{Barba16}.

\paragraph*{Our Contributions}
We show that \GPS is NP-hard and APX-hard and we prove a subexponential lower bound based on the Exponential Time Hypothesis.
Our main algorithmic results, however, concern the power of polynomial-time
data reduction for \GPS: We give an $O(k^3)$-point problem kernel
and an $O(h^2)$-point problem kernel, and show that the latter kernel is
asymptotically optimal under a reasonable complexity-theoretic assumption. \Cref{tab:results} summarizes our results.
\begin{table*}[t!]
	\caption{Overview of the results we obtain for \GPS, where~$n$ is the number of input points, $k$ is the parameter size of the sought subset in general position, $h=n-k$ is the dual parameter, and~$\ell$ is the line cover number. \vspace*{3mm}}
	\label{tab:results}
		\centering
		\def\arraystretch{1.1}
		\begin{tabular}{cll}
			\toprule
			& Result & Reference\\
			\midrule
			\multirow{4}{*}{\rotatebox[origin=c]{90}{Hardness}} & NP-hard & \Cref{thm:NP_APX_hard}\\
			& APX-hard & \Cref{thm:NP_APX_hard}\\
			& no $2^{o(n)}\cdot n^{O(1)}$-time algorithm (unless the ETH fails.)  & \Cref{thm:NP_APX_hard}\\
			& no $O(h^{2-\epsilon})$-point kernel (unless $\coNPinNPpoly$.) & \Cref{thm:dual_points_lower_bound}\\
			\midrule
			\multirow{6}{*}{\rotatebox[origin=c]{90}{Tractability}} & $(15k^3)$-point kernel (computable in~$O(n^2 \log n)$ time) & \Cref{thm:cubic-kernel}\\
			& $O(n^2\log{n}+41^k \cdot k^{2k})$-time solvable & \Cref{cor:fptalgo}\\
			& $(2h^2 + h)$-point kernel (computable in~$O(n^2)$ time) & \Cref{thm:dual_kernel}\\
			& $O(2.08^h + n^3)$-time solvable & \Cref{prop:fptalgoDualParam}\\
			& $(120\ell^3)$-point kernel (computable in~$O(n^2 \log n)$ time) &  \Cref{cor:line_cover}\\
			& $O(n^2\log{n}+41^{2\ell} \cdot \ell^{4\ell})$-time solvable &  \Cref{cor:line_cover}\\
			\bottomrule
		\end{tabular}

\end{table*}

\section{Preliminaries}
\label{sec:prelim}

In this section we introduce some basic definitions.

\paragraph{Geometry}

All coordinates of points are assumed to be represented by rational numbers (we denote the set of rational numbers by~$\Q$).
The {\em collinearity} of a set of points $P$ is the maximum number of points in $P$ that lie on the same line.
A {\em blocker} for two points $p, q$ is a point on the open line segment $pq$.

\paragraph{Graphs}

Let $G = (V(G), E(G))$ be an undirected graph.
We write $|G|$ for $|V(G)| + |E(G)|$.
A vertex $u\in V(G)$ is a {\em neighbor} of (or is {\em adjacent} to) a vertex $v\in V(G)$ if $\{u,v\} \in E(G)$.
The {\em degree} of a vertex $v$ is the number of its neighbors.

An {\em independent set} of a graph $G$ is a set of vertices such that no two vertices in this set are adjacent.
A {\em maximum independent set} is an independent set of maximum cardinality. A {\em vertex
   cover} of $G$ is a set of vertices such that each edge in $G$ is
 incident to at least one vertex in this set.
The NP-complete {\sc Independent Set} problem is: Given an undirected graph~$G$ and $k \in \N$, decide whether $G$ has an independent set of cardinality~$k$.
The {\sc Maximum Independent Set} problem is the optimization version of {\sc Independent Set}, which is the problem of computing an independent set of maximum cardinality in a given graph.
 The NP-complete {\sc Vertex Cover} problem is: Given an undirected graph $G$ and $k \in \N$,
 decide whether $G$ has a vertex cover of cardinality~$k$.

\paragraph{Parameterized Complexity}
A {\em parameterized problem} is a set of instances of the form $({\cal I}, k)$, where  ${\cal I} \in \Sigma^*$ for a finite alphabet set $\Sigma$, and $k \in \N$ is the {\em parameter}.
A parameterized problem $Q$ is {\it fixed-parameter tractable}, shortly \FPT, if there exists an algorithm that on input  $({\cal I}, k)$ decides whether $({\cal I}, k)$ is a yes-instance of~$Q$ in $f(k)|{\cal I}|^{O(1)}$ time,  where $f$ is a computable function independent of $|{\cal I}|$.
A parameterized problem $Q$ is \emph{kernelizable} if there exists a polynomial-time algorithm that maps an instance $({\cal I},k)$ of $Q$ to another instance $({\cal I}',k')$ of $Q$ such that:
 \begin{compactenum}[(1)]
   \item $|{\cal I}'| \leq \lambda(k)$ for some computable function
     $\lambda$,
   \item $k' \leq \lambda(k)$, and
   \item $({\cal I},k)$ is a yes-instance of $Q$ if and only if
     $({\cal I}',k')$ is a yes-instance of $Q$.
 \end{compactenum}
The instance $({\cal I}',k')$ is called a {\em problem kernel} of $({\cal I}, k)$.
A parameterized problem is \FPT~if and only if it is kernelizable~\cite{CCDF97}.
A general account on applying methods from parameterized complexity analysis to problems from computational geometry is due to \citet{GKW08}.

\lv{
\paragraph{Approximation}

A \emph{polynomial-time approximation scheme} (PTAS) for a maximization problem~$Q$ is an algorithm~$A$ that takes an instance~$I$ and a constant~$\rho > 1$ and returns in $O(n^{f(\rho)})$ time for some computable function~$f$ a solution of value at least~$\opt(I) / \rho$, where~$\opt(I)$ denotes the value of an optimal solution of~$I$.
A maximization problem that is \APX-hard does not admit a \PTAS unless $\P = \NP$.
Refer to the books by~\citet{Vaz01} and \citet{WS11} for a more comprehensive discussion of approximation algorithms.
}

\paragraph{Exponential Time Hypothesis}
The Exponential Time Hypothesis (ETH)~\cite{IPZ01} states that \textsc{3-SAT} cannot be solved in~$2^{o(n)}\cdot n^{O(1)}$ time, where~$n$ is the number of variables in the input formula.

\section{Hardness Results}
\label{sec:hardness}

In this section, we prove that \GPS is \NP-hard, \APX-hard, and presumably not solvable in subexponential time.
Our hardness results follow from a transformation (mapping arbitrary
graphs to point sets) that is based on a construction due to
\citet[Section~5]{GR15}, which they used to prove the NP-hardness of the
\IS problem on so-called \emph{point visibility graphs}.
This transformation, henceforth called~$\Phi$, allows us to obtain the above-mentioned hardness results (using reductions from NP-hard restrictions of \IS to \GPS).
Moreover, in \Cref{subsec:dual}, we will use~$\Phi$ to give a reduction
from \VC to \GPS in order to obtain kernel size
lower bounds with respect to the dual parameter
(see \Cref{thm:dual_size_lower_bound} and~\Cref{thm:dual_points_lower_bound}).
We start by formally defining some properties that are required for the output point set of the transformation.
As a next step, we prove that such a point set can be realized
in polynomial time.

Let $G$ be a graph with vertex set~$V(G)=\{v_1, \dots, v_n\}$.
Let~$C=\{p_1,\ldots,p_n\}$ be a set of points that are in strictly convex position (that is, the points in~$C$ are vertices of a convex polygon), where~$p_i\in C$ corresponds to~$v_i$, $i=1, \dots, n$.
For each edge~$e=\{v_i,v_j\} \in E(G)$, we place a \emph{blocker}~$b_e$ on the line segment~$p_ip_j$ such that the following three conditions are satisfied:
\begin{compactenum}[(I)]
  \item \label[cond]{cond:bpp} For any edge $e \in E(G)$ and for any two points $p_i, p_j \in C$, if $b_e, p_i, p_j$ are collinear, then $p_i, p_j$ are the points in $C$ corresponding to the endpoints of edge $e$.
  \item \label[cond]{cond:bbp} Any two distinct blockers $b_e, b_{e'}$ are not collinear with any point $p_i \in C$.
  \item \label[cond]{cond:bbb} The set~$B:=\{b_e\mid e\in E(G)\}$ of blockers is in general position.
\end{compactenum}

See \Cref{fig:reduction} for an example of the transformation described above.

\begin{figure}[t]
  \centering
  \begin{tikzpicture}
    \tikzstyle{vertex} = [draw, circle, inner sep=0pt];

    \node[vertex] (v1) at (0,2) {$v_1$};
    \node[vertex] (v2) at (2,2) {$v_2$};
    \node[vertex] (v3) at (2,0) {$v_3$};
    \node[vertex] (v4) at (0,0) {$v_4$};
    \node[vertex] (v5) at (-1,1) {$v_5$};

    \draw (v1) -- (v2);
    \draw (v1) -- (v5);
    \draw (v1) -- (v4);
    \draw (v2) -- (v3);
    \draw (v3) -- (v5);
    \draw (v4) -- (v5);
  \end{tikzpicture}
  \hspace{5em}
  \begin{tikzpicture}[scale=0.66]
    \tikzstyle{point} = [draw, circle, inner sep=0pt, minimum size=5pt];
    \tikzstyle{blocker} = [draw=black, circle, fill=black, inner sep=0pt, minimum size=4pt];

    \node[point, label=above:{$p_1$}] (p1) at (0,5) {};
    \node[point, label=above:{$p_2$}] (p2) at (3,3) {};
    \node[point, label=below:{$p_3$}] (p3) at (1.5,0) {};
    \node[point, label=below:{$p_4$}] (p4) at (-1.5,0) {};
    \node[point, label=above:{$p_5$}] (p5) at (-3,3) {};

    \draw[draw=gray] (p1) -- node[pos=0.5,blocker, label=above:{$b_{12}$}] (b12) {} (p2);
    \draw[draw=gray] (p1) -- node[pos=0.5,blocker, label=above:{$b_{15}$}] (b15)  {} (p5);
    \draw[draw=gray] (p1) -- node[pos=0.3,blocker, label=right:{$b_{14}$}] (b14) {} (p4);
    \draw[draw=gray] (p2) -- node[pos=0.5,blocker, label=right:{$b_{23}$}] (b23) {} (p3);
    \draw[draw=gray] (p3) -- node[pos=0.3,blocker, label=above:{$b_{35}$}] (b35) {} (p5);
    \draw[draw=gray] (p4) -- node[pos=0.5,blocker, label=left:{$b_{45}$}] (b45) {} (p5);
  \end{tikzpicture}
  \caption{Example of a graph (left) and a point set (right) satisfying \cref{cond:bpp,cond:bbp,cond:bbb}. The set contains a point (white) for each vertex and a blocker (black) for each edge in the graph such that the only collinear triples are $(p_i,p_j,b_{ij})$ for every edge~$\{v_i,v_j\}$. Also, no four points in the set are collinear.
  }
  \label{fig:reduction}
\end{figure}
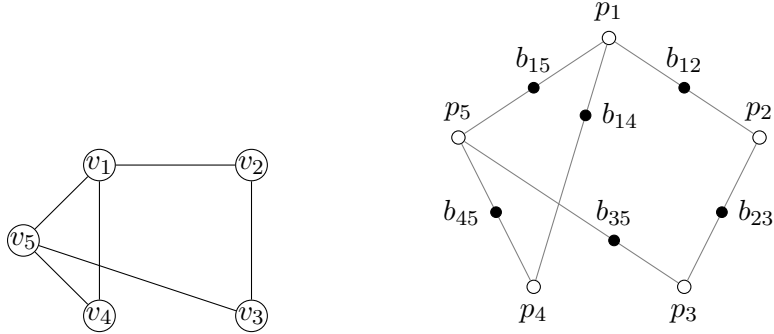

\begin{proposition}\label[proposition]{prop:valid}
  There is a polynomial-time transformation~$\Phi$ mapping arbitrary graphs to point sets that satisfy \cref{cond:bpp,cond:bbp,cond:bbb}. Moreover, no four points in the point set~$C\cup B$ produced by~$\Phi$ are collinear.
\end{proposition}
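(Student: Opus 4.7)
The plan is to construct $\Phi$ incrementally: first fix the convex points on a standard convex curve, then place the blockers one at a time, at each step parameterizing the current chord and avoiding a finite set of ``bad'' positions on it.

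Concretely, take $p_i := (i, i^2)$ for $i = 1, \dots, n$; these lie on a parabola, so they are in strictly convex position with no three collinear, and their coordinates have bit length $O(\log n)$. Process the edges $e_1, \dots, e_m$ of $G$ in arbitrary order. For the current edge $e = \{v_i, v_j\}$, parameterize the open segment from $p_i$ to $p_j$ by $t \in (0,1)$ via $b_e(t) := (1-t)p_i + t p_j$, and collect the set $F_e \subset (0,1) \cap \Q$ of \emph{forbidden} parameters at which $b_e(t)$ would violate one of the three conditions: namely, those $t$ for which $b_e(t)$ lies on a line $p_k p_\ell$ with $\{k,\ell\}\neq\{i,j\}$ (for \cref{cond:bpp}), on a line through an already-placed blocker $b_{e'}$ and some $p_k \in C$ (for \cref{cond:bbp}), or on the line through two already-placed blockers (for \cref{cond:bbb}). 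Each such bad line meets the open segment in at most one point, so $|F_e| = O(n^2 + m^2) = O(n^4)$. By pigeonhole, among the $|F_e|+1$ equally spaced rationals $j/(|F_e|+2)$ with $1 \le j \le |F_e|+1$, at least one avoids $F_e$; picking any such $t^*$, set $b_e := b_e(t^*)$.

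All relevant intersections are obtained by solving $2\times 2$ linear systems over rationals of bit length polynomial in $n$; together with the pigeonhole choice above this keeps every coordinate at polynomial bit length throughout, so $\Phi$ runs in polynomial time. The \emph{moreover} statement then follows from \cref{cond:bpp,cond:bbp,cond:bbb} together with strict convex position via a short case analysis on how four hypothetically collinear points split between $C$ and $B$: three points from $C$ is excluded by strict convex position; three blockers on a common line by \cref{cond:bbb}; two blockers plus a point $p_i$ by \cref{cond:bbp}; and two points $p_i, p_j$ together with two blockers is impossible because by \cref{cond:bpp} the only blocker on the line $p_i p_j$ is the one associated with the edge $\{v_i, v_j\}$.

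The construction itself is not subtle; the main obstacle is simply the bookkeeping---verifying that each iteration produces only polynomially many forbidden parameters, that a rational $t^*$ of polynomial bit length avoiding $F_e$ always exists (which is the purpose of the evenly spaced pigeonhole family), and that this choice keeps the bit complexity of the blockers and of all subsequent intersection points polynomial in $n$ so that $\Phi$ is genuinely a polynomial-time transformation.
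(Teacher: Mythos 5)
Your proposal is correct and follows essentially the same route as the paper: fix the vertex points on a strictly convex rational curve, then place the blockers one edge at a time, avoiding on each chord the finitely many intersection points with lines spanned by previously placed points, which keeps everything rational and polynomial-size. The only differences are cosmetic (parabola instead of rational points on the unit circle, and an evenly spaced pigeonhole choice of the parameter instead of the paper's ``midpoint before the first marked point''), and your case analysis for the no-four-collinear claim matches what the paper leaves as an easy consequence of \cref{cond:bpp,cond:bbp,cond:bbb} and strict convexity.
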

{
\begin{proof}
Given a graph $G$, let $n=|V(G)|$ and let $C=\{p_1,\ldots,p_n\}$ be a set of rational
points that are in a strictly convex
position; for instance, let $p_j := (\frac{2j}{1+j^2}, \frac{1-j^2}{1+j^2})$ for~$j\in\{1,\ldots,n\}$ be~$n$ rational points on the unit circle centered at the origin \cite{tan96}.
To choose the set~$B$ of blockers,
suppose (inductively) that we have chosen a subset $B'$ of blockers such
that all blockers in~$B'$ are rational points and satisfy
\cref{cond:bpp,cond:bbp,cond:bbb}.
Let~$b_e \not\in B'$ be a blocker corresponding to an edge
$e=\{v_i,v_j\}$ in~$G$.
To determine the coordinates of~$b_e$, we first mark the intersection points (if any) between the line segment $p_ip_j$ and the lines formed by every pair of distinct blockers in~$B'$, every pair of distinct points in~$C \setminus \{p_i, p_j\}$, and every pair consisting of a blocker in~$B'$ and a point in $C \setminus \{p_i, p_j\}$.
We then choose $b_e$
to be an interior point of $p_ip_j$ with rational coordinates that is
distinct from all marked points.
To this end, let~$q$ be the first marked point on the segment~$p_ip_j$ (starting from~$p_i$),
and let~$b_e$ be the midpoint of~$p_iq$. This point is rational since it is the midpoint of rational points.
It is easy to see that~$C\cup B$ can be constructed
in polynomial time and that all points in $C\cup B$ are rational
and satisfy \cref{cond:bpp,cond:bbp,cond:bbb}. Moreover, it easily follows from the construction of $C\cup B$ that no four points in $C \cup B$ are collinear.
\end{proof}
}

\sv{
	Using transformation~$\Phi$, we can prove (proof omitted) the following hardness results via reductions from (variants of) \IS.
}
\lv{
In what follows, we will use transformation~$\Phi$ as a reduction from
\IS to \GPS in order to prove our hardness results.
The following observation will be helpful in proving the correctness
of the reduction.

\begin{observation}
  \label[observation]{obs:solutionstructure}
  Let~$G$ be an arbitrary graph, and let~$P:=\Phi(G)=C\cup B$. For any point set $S\subseteq P$ that is in general position, there is a general position set of size at least~$|S|$ that contains the set of blockers~$B$.
\end{observation}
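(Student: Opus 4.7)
The plan is to produce, from any general-position subset $S\subseteq P$, a general-position subset $S'$ that contains all blockers and satisfies $|S'|\geq |S|$. Split $S$ into $T:=S\cap C$ and $S_B:=S\cap B$. The first thing to record is that by \cref{cond:bpp,cond:bbp,cond:bbb} the only collinear triples in all of $P=C\cup B$ are of the form $(p_i,p_j,b_e)$ with $e=\{v_i,v_j\}\in E(G)$; therefore a subset of $P$ is in general position precisely when it avoids all such triples.

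Next I would exploit the general-position assumption on $S$: for every edge $e=\{v_i,v_j\}\in E(G)$ with $p_i,p_j\in T$, the blocker $b_e$ cannot lie in $S_B$. Hence the edges of the subgraph of $G$ induced by the vertices corresponding to $T$ (call it $G[T]$) inject into $B\setminus S_B$ via $e\mapsto b_e$. In particular $|E(G[T])|\leq |B\setminus S_B|=|B|-|S_B|$.

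Now build a vertex cover $U$ of $G[T]$ greedily by taking one endpoint of each edge; this yields $|U|\leq |E(G[T])|\leq |B|-|S_B|$. Let $T':=T\setminus U$, so $T'$ (viewed as a set of vertices of $G$) is independent in $G$, and define
\[
S':=B\cup T'.
\]
Then $B\subseteq S'$ by construction, and $S'$ is in general position: any would-be collinear triple must be of the shape $(p_i,p_j,b_e)$ with $\{v_i,v_j\}\in E(G)$, but having $p_i,p_j\in T'$ would contradict the independence of $T'$.

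Finally, a counting check closes the argument:
\[
|S'| \;=\; |B|+|T'| \;=\; |B|+|T|-|U| \;\geq\; |B|+|T|-(|B|-|S_B|) \;=\; |S_B|+|T| \;=\; |S|.
\]
I do not foresee any real obstacle; the only genuinely non-trivial ingredient is the inequality $|E(G[T])|\leq |B\setminus S_B|$, which follows directly from the fact that $S$ itself is collinearity-free.
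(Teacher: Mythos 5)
Your proof is correct, but it follows a genuinely different route from the paper's. The paper argues by a local exchange: for each missing blocker~$b\in B\setminus S$ it observes that, by \cref{cond:bpp,cond:bbp,cond:bbb}, the only pair of points of~$P$ collinear with~$b$ is the pair of convex-position points corresponding to the endpoints of its edge, so~$b$ can either be added to~$S$ outright or swapped in for one of those two points without decreasing the cardinality; iterating over all blockers yields the claim. You instead give a one-shot global construction: keep all of~$B$, delete from~$T=S\cap C$ a vertex cover~$U$ of the induced conflict subgraph~$G[T]$, and close the cardinality gap by the injection of~$E(G[T])$ into~$B\setminus S_B$ together with~$|U|\le|E(G[T])|$. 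Your argument is sound (the counting chain checks out), and it has the virtue of making explicit the correspondence between general-position subsets containing~$B$ and independent sets of~$G$, which the paper only exploits afterwards in \Cref{lem:IS-to-GPS}; the paper's exchange argument is shorter and avoids any counting. One small point of care: your opening claim that the only collinear triples in~$P$ are of the form~$(p_i,p_j,b_e)$ with~$e=\{v_i,v_j\}\in E(G)$ does not follow from \cref{cond:bpp,cond:bbp,cond:bbb} alone --- you also need that the points of~$C$ are in strictly convex position (so no three of them are collinear), which is part of the construction of~$\Phi$ and should be invoked explicitly.
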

 
{
\begin{proof}
  Suppose that $S \subseteq P$ is in general position,
  and suppose that there is a point~$b\in B\setminus S$.
  If~$b$ does not lie on a line defined by any two points
  in~$S$, then~$S\cup\{b\}$ is in general position.
  Otherwise, $b$ lies on a line defined by two points $p, q \in S$.
  By \cref{cond:bpp,cond:bbp}, it
  holds that $p,q\in C$. Moreover, $p$ and~$q$ are the only two points
  in~$S$ that are collinear with~$b$.
  Hence, we exchange one of them with~$b$ to obtain
  a set of points in general position of the same cardinality as~$S$.
  Since $b \in B$ was arbitrarily chosen, we can repeat the above
  argument to obtain a subset in general position of cardinality at
  least~$|S|$ that contains~$B$.
\end{proof}
}

Using \Cref{obs:solutionstructure}, we can give a polynomial-time
many-one reduction from \IS to \GPS based on transformation~$\Phi$.

\begin{lemma}\label[lemma]{lem:IS-to-GPS}
  There is a polynomial-time many-one reduction from \IS to \GPS. Moreover, each instance of \GPS produced by this reduction satisfies the property that no four points in the instance are collinear.
\end{lemma}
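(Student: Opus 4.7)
The plan is to use the transformation~$\Phi$ from \Cref{prop:valid} directly, mapping an \IS instance $(G,k)$ to the \GPS instance $(\Phi(G), k + |E(G)|)$. Since $\Phi$ runs in polynomial time and the cardinality offset $|E(G)|$ is easily computed, this is a polynomial-time many-one reduction. The ``moreover'' clause about no four collinear points is immediate from \Cref{prop:valid}, so the only real work is establishing the equivalence of the two instances.

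For the forward direction, given an independent set $I \subseteq V(G)$ of size $k$, I would take $S := B \cup \{p_i \mid v_i \in I\}$, which has cardinality $|E(G)| + k$, and verify that $S$ is in general position by a case analysis over the type of a potential collinear triple: three blockers are excluded by \cref{cond:bbb}; two blockers together with a point of $C$ by \cref{cond:bbp}; a single blocker $b_e$ together with two points $p_i, p_j \in C$ by \cref{cond:bpp}, since then $e = \{v_i,v_j\}$, but independence of $I$ forbids both $p_i$ and $p_j$ from lying in $S$ simultaneously; and three points of $C$ by the strictly convex position of $C$.

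For the backward direction, given a set $S \subseteq \Phi(G)$ in general position with $|S| = k + |E(G)|$, I would first invoke \Cref{obs:solutionstructure} to assume without loss of generality that $B \subseteq S$, so that $|S \cap C| = k$. I would then define $I := \{v_i \mid p_i \in S \cap C\}$ and verify independence: if some edge $e = \{v_i, v_j\}$ had both endpoints in $I$, then the three points $p_i, p_j, b_e$ would all lie in $S$ and would be collinear by construction of the blocker~$b_e$, contradicting the general position of~$S$.

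I do not expect a substantive obstacle here, since the heavy lifting has been done in \Cref{prop:valid} and \Cref{obs:solutionstructure}. The only step requiring a little care is organizing the forward-direction case analysis so that \cref{cond:bpp,cond:bbp,cond:bbb} together with the convex position of~$C$ collectively cover every possible collinear triple in $C \cup B$.
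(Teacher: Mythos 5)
Your proposal is correct and follows essentially the same route as the paper: map $(G,k)$ to $(\Phi(G),k+|E(G)|)$, prove the forward direction by ruling out every type of collinear triple via \cref{cond:bpp,cond:bbp,cond:bbb} and the convex position of~$C$, and prove the backward direction by using \Cref{obs:solutionstructure} to assume $B\subseteq S$ and reading off an independent set. The only cosmetic difference is that you make the case analysis over triple types fully explicit, whereas the paper compresses it into the observation that any collinear triple must consist of exactly two points of~$C$ and one blocker.
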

{
\begin{proof}
  Let~$(G,k)$ be an instance of \IS, where $k \in \N$. The \GPS
  instance is defined as~$(P:=\Phi(G),k+|E(G)|)$.
  Clearly, by \Cref{prop:valid}, the set~$P$ can be computed in
  polynomial time, and no four points in $P$ are collinear.
  We show that~$G$ has an independent
  set of cardinality $k$ if and only if~$P$ has a
  subset in general position of cardinality $k + |E(G)|$.

  Suppose that $I \subseteq V(G)$ is an independent set of cardinality
  $k$, and let~$S:=\{p_i\mid v_i\in I\} \cup B$, where $B$ is the set
  of blockers in $P$. Since $|B|=|E(G)|$, we have $|S|=k + |E(G)|$.
  Suppose towards a contradiction that~$S$ is not in general position,
  and let $q, r, s$ be three distinct collinear points in~$S$.
  By \cref{cond:bbp,cond:bbb}, and since
  the points in~$C$ are in a strictly convex position, it follows that
  exactly two of the points $q, r, s$ must be in~$C$.
  Suppose, without loss of generality, that $q=p_i, r=p_j \in C$ and
  $s \in B$. By \cref{cond:bpp}, there is an edge
  between the vertices~$v_i$ and~$v_j$ in~$G$ that correspond to the
  points $p_i, p_j \in C$, contradicting that $v_i, v_j \in I$.
  It follows that~$S$ is a subset in general position of
  cardinality~$k+|E(G)|$.

  Conversely, assume that~$S \subseteq P$ is in general position and
  that $|S|= k + |E(G)|$.
  By \Cref{obs:solutionstructure}, we may assume that $B \subseteq S$.
  Let~$I$ be the set of vertices corresponding to the points in~$S
  \setminus B$, and note that $|I|=k$. Since $B \subseteq S$, no two
  points $v_i, v_j$ in~$I$ can be adjacent;
  otherwise, their corresponding points $p_i, p_j$ and the blocker of
  edge $\{v_i,v_j\}$ would be three collinear points in~$S$.
  It follows that~$I$ is an independent set of cardinality~$k$ in~$G$.
\end{proof}
}

\Cref{lem:IS-to-GPS} implies the NP-hardness of \GPS.
Furthermore, a careful analysis of the proof of \Cref{lem:IS-to-GPS} reveals the intractability of an extension variant of \GPS, where as an additional input to the problem a subset~$S' \subseteq P$ of points in general position is given and the task is to find~$k$ \emph{additional} points in general position, that is, one looks for a point subset~$S \subseteq P$ in general position such that~$S' \subset S$ and~$|S| \ge |S'| + k$.
By \Cref{obs:solutionstructure}, we can assume for the instance created by transformation~$\Phi$ that~$B$ (the set of blockers) is contained in a maximum-cardinality point subset in general position. Thus,
we can set~$S' := B$.
The proof of \Cref{lem:IS-to-GPS} then shows that~$k$ points can be added to~$S'$ if and only if the graph~$G$ contains an independent set of size~$k$.
Since \textsc{Independent Set} is W[1]-hard with respect to the solution size~\cite{DF13},
we can observe the following:

\begin{observation}
  The extension variant of \GPS described above is W[1]-hard when parameterized by the number~$k$ of additional points. 
\end{observation}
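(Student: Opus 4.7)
The plan is to reuse transformation $\Phi$ from \Cref{prop:valid} as a parameterized reduction from \IS to the extension variant of \GPS, taking care that the parameter $k$ is preserved. Given an \IS-instance $(G,k)$, I set $P := \Phi(G) = C \cup B$ and designate $S' := B$ as the prescribed ``already chosen'' subset in general position. By \cref{cond:bbb}, $S'$ is indeed in general position, so the resulting tuple $(P, S', k)$ is a valid instance of the extension variant of \GPS. Since $\Phi$ runs in polynomial time and the parameter $k$ is transferred unchanged, this will be a parameterized reduction as soon as the equivalence is established.

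The correctness equivalence to prove is: $G$ has an independent set of size $k$ if and only if there exists $S \subseteq P$ in general position with $S' \subseteq S$ and $|S| \ge |S'| + k$. For the forward direction, given an independent set $I$ of size $k$ in $G$, I would take $S := B \cup \{p_i \mid v_i \in I\}$, which contains $S' = B$, has cardinality $|S'| + k$, and is in general position by exactly the argument given in the proof of \Cref{lem:IS-to-GPS} (any collinear triple in $S$ would, by \cref{cond:bbp,cond:bbb} and the convex position of $C$, have to consist of two points $p_i, p_j \in C$ and a blocker $s \in B$, which by \cref{cond:bpp} would force $v_iv_j \in E(G)$, contradicting independence of $I$).

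For the reverse direction, suppose $S \supseteq B$ is in general position with $|S \setminus B| \ge k$. Let $I := \{v_i \mid p_i \in S \cap C\}$. Then $|I| \ge k$, and if $v_i, v_j \in I$ were adjacent in $G$, the blocker $b_{\{v_i,v_j\}} \in B \subseteq S$ together with $p_i, p_j \in S$ would form a collinear triple in $S$, contradicting general position; hence $I$ is independent. Trimming $I$ to size exactly $k$ completes the argument.

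Since \IS parameterized by the solution size $k$ is W[1]-hard \cite{DF13} and the reduction above preserves $k$ exactly, the W[1]-hardness of the extension variant parameterized by $k$ follows. I do not anticipate a serious obstacle here: both directions of the equivalence are already essentially contained in the proof of \Cref{lem:IS-to-GPS}, and the only real content is observing that fixing $S' := B$ turns the many-one reduction into a parameterized one with the same parameter.
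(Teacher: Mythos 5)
Your proposal is correct and matches the paper's argument: the paper likewise sets $S' := B$ in the instance produced by~$\Phi$ and invokes the equivalence from the proof of \Cref{lem:IS-to-GPS} (with the parameter~$k$ unchanged) to transfer the W[1]-hardness of \IS. No substantive difference.
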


Hence, this extension is \emph{not} fixed-parameter tractable with respect to~$k$, unless~W[1]${}={}$FPT.
The reader may want to contrast the W[1]-hardness result for the aforementioned extension variant of \GPS with the fixed-parameter tractability results for \GPS shown in \Cref{subsec:gps}.

Next, we turn our attention to approximation.
A closer inspection of transformation~$\Phi$ reveals that we can obtain a \PTAS-reduction from {\sc Maximum Independent Set} to the optimization version of \GPS.

\begin{definition}
  \label{def:PTAS-reduction}
  Given two maximization problems
  $Q$ and $Q'$, a {\em PTAS-reduction} from $Q$ to
  $Q'$ consists of three polynomial-time computable functions
  $f$, $f'$ and $\alpha\colon\mathds{Q}\to(1,\infty)$
  such that:
  \begin{compactenum}[(1)]
  \item For any instance $I$ of~$Q$ and for any constant $\rho>1$, $f$ produces an instance $I' = f(I, \rho)$ of $Q'$.
  \item For any solution $x'$ of $I'$ and for any $\rho>1$, $f'$
    produces a solution $x =f'(I,x',\rho)$ of~$I$ such that:
    $$\frac{\opt(I')}{|x'|} \leq \alpha(\rho)\; \Rightarrow\; \frac{\opt(I)}{|x|} \leq \rho.$$
  \end{compactenum}
\end{definition}

By {\sc IS-3} we denote the {\sc Maximum Independent Set} problem
restricted to graphs of maximum degree at most~3.
By \optGPS we denote the optimization version of \GPS
in which one seeks to compute a largest subset of points in general
position in a given point set.

\begin{lemma}\label[lemma]{lem:ptasreduction}
  There is a PTAS-reduction from {\sc IS-3} to {\rm \optGPS}.
\end{lemma}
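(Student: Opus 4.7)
The plan is to use the transformation $\Phi$ from \Cref{prop:valid} as the reduction itself, independent of the approximation ratio. Concretely, I set $f(G,\rho) := \Phi(G) = C \cup B$. By (the proof of) \Cref{lem:IS-to-GPS} together with \Cref{obs:solutionstructure}, the optima are related by $\opt(\Phi(G)) = \opt(G) + |E(G)|$, where the left-hand side denotes the maximum size of a general position subset of $\Phi(G)$ and the right-hand side the maximum size of an independent set in $G$. For $f'$, given a general position subset $S \subseteq \Phi(G)$, I first invoke \Cref{obs:solutionstructure} to obtain, in polynomial time, a general position set $S^* \supseteq B$ with $|S^*| \ge |S|$, and then output the vertex set $I := \{v_i \mid p_i \in S^* \setminus B\}$. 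By the argument used to prove the backward direction of \Cref{lem:IS-to-GPS}, $I$ is independent in $G$ and satisfies $|I| = |S^*| - |E(G)| \ge |S| - |E(G)|$.

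The crux is choosing $\alpha(\rho)$. Write $m := |E(G)|$ and $\opt_I := \opt(G)$. Because $G$ has maximum degree at most $3$, a standard greedy argument gives $\opt_I \ge |V(G)|/4$, while $m \le 3|V(G)|/2$; combining these yields the linear bound $m \le 6\,\opt_I$. Now assume $\opt(\Phi(G))/|S| \le \alpha$. Using $\opt(\Phi(G)) = \opt_I + m$ and $|I| \ge |S| - m$, one gets
\[
|I| \;\ge\; \frac{\opt_I + m}{\alpha} - m \;=\; \frac{\opt_I - (\alpha-1)m}{\alpha}.
\]
Imposing $|I| \ge \opt_I/\rho$ and substituting the worst case $m = 6\,\opt_I$ reduces to the inequality $7\rho \ge \alpha(6\rho + 1)$, so it suffices to set
\[
\alpha(\rho) \;:=\; \frac{7\rho}{6\rho + 1}.
\]
A direct check shows $\alpha(\rho) > 1$ whenever $\rho > 1$, so $\alpha\colon \Q \to (1,\infty)$ as required by \Cref{def:PTAS-reduction}.

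The main technical obstacle is precisely the need for a linear bound $m \le c\cdot \opt_I$ for some constant $c$: the identity $\opt(\Phi(G)) = \opt(G) + |E(G)|$ contains an additive term $|E(G)|$ that is not directly controlled by $\opt(G)$ for general graphs, so an approximation guarantee for \optGPS would generally not translate back to one for \IS. Restricting to \textsc{IS-3}, where both $|E(G)|$ and $\opt(G)$ are $\Theta(|V(G)|)$, is exactly what allows this additive overhead to be absorbed into a multiplicative loss, producing the above $\alpha(\rho)$ and completing the PTAS-reduction.
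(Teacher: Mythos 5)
Your proposal is correct and follows essentially the same route as the paper: the reduction is $\Phi$ itself, the optima are related via \Cref{lem:IS-to-GPS} and \Cref{obs:solutionstructure}, and the degree-3 restriction is exploited through the same bounds $|E(G)|\le 3|V(G)|/2$ and $\opt(G)\ge |V(G)|/4$, i.e., $|E(G)|\le 6\opt(G)$. The only (harmless) differences are cosmetic: you absorb the additive overhead by bounding $m$ against the optimum, so you never need the paper's padding assumption $|S'|\ge |B|+|V(G)|/4$, and you arrive at $\alpha(\rho)=7\rho/(6\rho+1)$ instead of the paper's $(\rho+6)/7$, both of which satisfy \Cref{def:PTAS-reduction}.
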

{\begin{proof}
  Let $G$ be an instance of {\sc IS-3}, and note that $|E(G)| \leq
  3|V(G)|/2$. It is easy to see that~$G$ has an independent set
  of cardinality at least $|V(G)|/4$ that can be obtained by repeatedly
  selecting a vertex in~$G$ of minimum degree and discarding all its
  neighbors until the graph is empty.
  We define the computable functions $f$, $f'$ and $\alpha$ in
  \Cref{def:PTAS-reduction} as follows.
  The function~$f$, on input $(G,\rho)$, outputs $P:=\Phi(G)$;
  by \Cref{prop:valid}, $f$ is computable in polynomial time.
  Let $S$ be a subset in general position in~$P$.
  By \Cref{obs:solutionstructure}, there is a subset
  in general position $S'$ of cardinality at least $|S|$ that
  contains~$B$.
  We may assume that $|S'| \geq |B| + |V(G)|/4$; this assumption is
  justified because $G$ has an independent set of cardinality at least
  $|V(G)|/4$, and hence, by the proof of \Cref{lem:IS-to-GPS}, $P$ has a subset in
  general position of cardinality at least $|B|+|V(G)|/4$,
  which we may assume to contain $B$ by \Cref{obs:solutionstructure}.
  By the proof of \Cref{lem:IS-to-GPS}, $G$ has an independent set $I$ of
  cardinality $|S'| - |B| \geq |V(G)|/4$. We define $f'(G, S', \rho) := I$,
  which is clearly polynomial-time computable. Finally, we define $\alpha(\rho) := (\rho + 6)/7$.

  Let~$\opt(G)$ denote the cardinality of a maximum independent set
  in~$G$, and let $\opt(P)$ be the cardinality of a largest subset in
  general position in~$P$.
  From \Cref{lem:IS-to-GPS}, it follows that $\opt(P) = |B| +
  \opt(G)$. Let $S$ be an approximate solution to $P$,
  and by the discussion above, we may assume that $S$ contains~$B$ and
  is of cardinality at least $|B|+|V(G)|/4$. Let $I =f'(G, S, \rho)$, and note
  that $|I| \geq |V(G)|/4$. To finish the proof, we need to show that if
  $\opt(P)/|S| \leq (\rho +6)/7$, then $\opt(G)/|I| \leq \rho$.
  In effect, after noting that $|B|=|E(G)|\leq 3|V(G)|/2$ and $|I| \geq |V(G)|/4$, we have:
  \begin{alignat*}{3}
    & &\frac{\opt(P)}{|S|} & \leq \frac{\rho+6}{7}\\
    \iff & &\;\frac{|B|+\opt(G)}{|B|+|I|} & \leq \frac{\rho+6}{7}\\
    \iff & &\;\frac{\opt(G)}{|I|} & \leq \frac{\rho+6}{7} \cdot \frac{|B|+|I|}{|I|}-\frac{|B|}{|I|}\\
    & & &= \frac{\rho+6}{7} \cdot \left(\frac{|B|}{|I|} + 1\right)-\frac{|B|}{|I|}\\
    & & &= \frac{\rho-1}{7} \cdot \frac{|B|}{|I|} + \frac{\rho+6}{7}\\
    & & &\leq \frac{\rho-1}{7}\cdot 6  + \frac{\rho+6}{7}\\
    & & &=\rho.
  \end{alignat*}
\end{proof}}

Finally, we prove that transformation~$\Phi$ yields a polynomial-time reduction from
{\sc IS-3} to \optGPS, where the number of points in the point
set depends linearly on the number of vertices in the graph.
This implies an exponential-time lower bound based on
the Exponential Time Hypothesis (ETH)~\cite{IP99}.

\begin{lemma}\label[lemma]{lem:serfreduction}
  There is a polynomial-time reduction from {\sc IS-3}
  to \GPS mapping a graph~$G$ to a point set~$P$ of
  size~$O(|V(G)|)$.
\end{lemma}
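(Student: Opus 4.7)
The plan is to observe that this lemma is essentially an immediate corollary of \Cref{lem:IS-to-GPS} combined with the degree bound available for \textsc{IS-3}. I would simply use transformation~$\Phi$ from \Cref{prop:valid} as the reduction, unchanged, with parameter~$k$ mapped to~$k+|E(G)|$ exactly as in \Cref{lem:IS-to-GPS}. Correctness is then inherited from \Cref{lem:IS-to-GPS}, and the polynomial running time from \Cref{prop:valid}, so the only thing that actually needs to be argued here is the linear size bound on the output point set.

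For the size bound, I would invoke the handshake lemma: since every vertex of~$G$ has degree at most~$3$, we have $|E(G)|\le 3|V(G)|/2$. The point set produced by~$\Phi$ is $P=C\cup B$ with $|C|=|V(G)|$ and $|B|=|E(G)|$, so
\[
|P|=|V(G)|+|E(G)|\le |V(G)|+\tfrac{3}{2}|V(G)| = \tfrac{5}{2}|V(G)| = O(|V(G)|),
\]
as required.

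There is no real obstacle here; the only mild care needed is to confirm that the correctness proof of \Cref{lem:IS-to-GPS} goes through verbatim on \textsc{IS-3} inputs (which it does, since that proof never used any structural property of~$G$ beyond it being a graph) and that the final parameter $k+|E(G)|$ is still computable in polynomial time from the input, which is trivial. With \Cref{lem:serfreduction} in place, the ETH-based subexponential lower bound claimed in \Cref{thm:NP_APX_hard} follows in the standard way: a hypothetical $2^{o(n)}\cdot n^{O(1)}$-time algorithm for \GPS on an $n$-point instance would, via this linear-size reduction, yield a $2^{o(|V(G)|)}\cdot |V(G)|^{O(1)}$-time algorithm for \textsc{IS-3}, contradicting ETH (using the known fact that \textsc{IS-3} has no subexponential algorithm under ETH via the sparsification lemma).
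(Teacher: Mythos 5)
Your proof is correct and essentially identical to the paper's: both use~$\Phi$ (with parameter~$k+|E(G)|$) as the reduction, inherit correctness and polynomial running time from \Cref{lem:IS-to-GPS}, and bound $|P|=|V(G)|+|E(G)|\le |V(G)|+3|V(G)|/2=O(|V(G)|)$ via the degree-3 bound.
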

{\begin{proof}
  For an instance $G$ of {\sc IS-3}, the set $P := \Phi(G)$ is
  of cardinality $|P|=|V(G)|+|E(G)|\le |V(G)|+3|V(G)|/2 \in O(|V(G)|)$.
  By the proof of \Cref{lem:IS-to-GPS}, $\Phi$ is a polynomial-time reduction.
\end{proof}}

We summarize the consequences of \Cref{lem:IS-to-GPS,lem:ptasreduction,lem:serfreduction} in the following theorem:
}

\begin{theorem}
  \label{thm:NP_APX_hard}
The following are true:
  \begin{compactenum}[(a)]
    \item\label{NP} \GPS is \NP-complete.
    \item\label{APX} {\rm \optGPS} is \APX-hard.
    \item\label{ETH} Unless ETH fails, \GPS is not solvable in~$2^{o(n)}\cdot n^{O(1)}$ time. 
  \end{compactenum}

We note that the theorem even holds for the restriction of \GPS to instances in which no four points are collinear.
\end{theorem}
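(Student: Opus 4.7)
The plan is to derive each of the three claims from the lemmas established earlier in the section, each combined with a standard hardness result for \IS. The reductions themselves have already been built; what remains is to invoke them and to verify membership in \NP. In all three parts the ``no four collinear'' strengthening comes for free from \Cref{prop:valid}, since every instance produced by $\Phi$ has this property.

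For (a), I would combine \Cref{lem:IS-to-GPS} with the \NP-hardness of \IS to obtain \NP-hardness of \GPS, noting that membership in \NP is immediate because any candidate subset of size $k$ can be verified in $O(k^3)$ time by inspecting all triples. For (b), I would invoke \Cref{lem:ptasreduction}, which supplies a \PTAS-reduction from \textsc{IS-3} to \optGPS; since \textsc{IS-3} (\textsc{Maximum Independent Set} on graphs of maximum degree at most three) is classically \APX-hard, the \PTAS-reduction transfers \APX-hardness to \optGPS, and consequently \GPS admits no \PTAS unless $\P = \NP$. For (c), I would use \Cref{lem:serfreduction}, which maps an \textsc{IS-3} instance on $|V(G)|$ vertices to a \GPS instance with $O(|V(G)|)$ points; ETH together with the sparsification lemma rules out $2^{o(|V(G)|)}\cdot |V(G)|^{O(1)}$-time algorithms for \textsc{IS-3}, so a hypothetical $2^{o(n)}\cdot n^{O(1)}$-time algorithm for \GPS composed with this linear-size reduction would contradict ETH.

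The main obstacle, such as it is, lies not in the argument itself but in locating and citing the precise known inapproximability and subexponential lower bounds for \IS restricted to bounded-degree graphs; all the geometric heavy lifting has already been absorbed into the transformation $\Phi$ and its three accompanying lemmas, so the theorem amounts to collecting these consequences.
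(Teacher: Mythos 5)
Your proposal is correct and follows essentially the same route as the paper: each part is obtained by invoking \Cref{lem:IS-to-GPS}, \Cref{lem:ptasreduction}, and \Cref{lem:serfreduction} together with the known NP-hardness, APX-hardness, and ETH-based subexponential lower bound for \IS respectively (the paper cites \citet{JohnsonSzegedy99} for the latter on degree-3 graphs, which plays the role of your sparsification-lemma step), with the no-four-collinear strengthening coming from \Cref{prop:valid}. No gaps.
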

\lv{\begin{proof}
  Part~(\ref{NP}) follows from the \NP-hardness of {\sc Independent Set}~\cite{GJ79}, combined with \Cref{prop:valid} and \Cref{lem:IS-to-GPS} (membership in \NP~trivially holds).
  Part~(\ref{APX}) follows from the \APX-hardness of {\sc
    IS-3}~\cite{AK00}, combined with \Cref{prop:valid} and \Cref{lem:ptasreduction}.
  Concerning Part~(\ref{ETH}), it is well known that, unless ETH fails, {\sc Maximum Independent
    Set} is not solvable in subexponential time~\cite{IPZ01}, and the same is true for {\sc
    IS-3} by the results of \citet{JohnsonSzegedy99}.
  Hence, by the reduction in~\Cref{lem:serfreduction}, \GPS
  cannot be solved in subexponential time since this would imply
  a subexponential-time algorithm for {\sc IS-3}.

  Parts (a)--(c) remain true for the restriction of \GPS to instances in which no four points are collinear because the point set produced by transformation~$\Phi$ satisfies this property (see \Cref{prop:valid}).
\end{proof}}

Currently, the best approximation result for \optGPS is due to \citet{Cao12}, who provided a simple greedy $\sqrt{\opt}$-factor approximation algorithm.
Therefore, a large gap remains between the proven upper and the lower bound on the approximation factor.

\section{Fixed-Parameter Tractability}
\label{sec:fpt}
In this section, we prove several fixed-parameter tractability results for
\GPS.
In \Cref{subsec:gps} we develop cubic-size problem kernels with
respect to the parameter size~$k$ of the sought subset
in general position, and with respect to the line cover number~$\ell$.
In \Cref{subsec:dual}, we show a quadratic-size problem kernel with respect
to the \emph{dual parameter}~$h:=n-k$, that is, the number of points whose
deletion leaves a set of points in general position. Moreover,
we prove that this problem kernel is essentially optimal, unless an unlikely collapse in the polynomial hierarchy occurs.

\subsection{Fixed-Parameter Tractability Results for the Parameter Solution Size~$k$}
\label{subsec:gps}
Let $(P, k)$ be an instance of \GPS, and let~$n=|P|$.  \citet{Cao12} gave a problem kernel for \GPS of size~$O(k^4)$ based on the following idea.
Suppose that there is a line $L$ containing at least $\binom{k-2}{2}+2$ points from $P$. For any subset $S' \subset P$ in general position with~$|S'|=k-2$, there can be at most
 $\binom{k-2}{2}$ points on~$L$ such that each is collinear with two points in~$S'$. Hence, we can always find at least two points on $L$ that together with the points in $S'$ form a subset $S$
 in general position of cardinality $k$. Based on this idea, \citet{Cao12} introduced the following data reduction rule:

\begin{rrule}[\cite{Cao12}]
    \label[rrule]{rul:heavy_line}
	Let~$(P,k)$ be an instance of \GPS.
	If there is a line $L$ that contains at least~$\binom{k-2}{2}+2$ points from $P$,
        then remove all the points on $L$ and set~$k:=k-2$.
\end{rrule}

Cao showed that \Cref{rul:heavy_line} can be exhaustively
applied in $O(n^3)$ time (\cite[Lemma~B.1.]{Cao12}), and he showed its correctness, that is, applying~\Cref{rul:heavy_line} to an instance~$(P,k)$ yields an instance~$(P',k')$ which is a yes-instance if and only if $(P,k)$ is (\cite[Theorem~B.2.]{Cao12}).
Using \Cref{rul:heavy_line}, he gave a kernel for \GPS of size~$O(k^4)$ that is computable in $O(n^3)$ time (\cite[Theorem~B.3.]{Cao12}).
We shall improve on \citeauthor{Cao12}'s result, both in terms of the kernel size and the running time of the kernelization algorithm.
We start by showing how, using a result by \citet[Theorem~3.2]{guibasetal96}, \Cref{rul:heavy_line} can be applied exhaustively in $O(n^2\log{n})$ time.
Notably, the idea of reducing lines with many points (based on \citet{guibasetal96}) also yields kernelization results for \textsc{Point Line Cover} \cite{LM05}.

 \begin{lemma}
 \label[lemma]{lem:heavylinetime}
 Given an instance $(P, k)$ of \GPS where $|P|=n$, in $O(n^2\log{n})$ time we can compute an equivalent instance $(P', k')$ such that either $(P', k')$ is a trivial yes-instance, or
 the collinearity of $P'$ is at most~$\binom{k-2}{2}+1$.
 \end{lemma}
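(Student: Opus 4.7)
My plan is to use \citet[Theorem~3.2]{guibasetal96} to precompute the full line--point incidence structure of $P$ once, and then to apply \Cref{rul:heavy_line} in a single sweep. Concretely, I invoke their algorithm to obtain, in $O(n^2\log n)$ time, for every line~$L$ passing through at least two points of $P$, the set $L\cap P$. From this I extract in $O(n^2)$ additional time the list $L_1,\ldots,L_r$ of \emph{initially heavy} lines, i.e.\ those with $|L_i\cap P|\ge\binom{k-2}{2}+2$ where $k$ denotes the \emph{original} parameter.

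Next I process $L_1,\ldots,L_r$ in an arbitrary order: at step $i$ I delete from $P$ all points on $L_i$ that have not already been deleted, and I set $k\gets k-2$. If $k$ drops to $2$ or less I halt and return the current instance, which is a trivial yes-instance since any subset of at most two points is automatically in general position. To see that each step is a legal application of \Cref{rul:heavy_line} with respect to the \emph{current} $k$, note that any two distinct lines share at most one point, so $L_i$ has lost at most $i-1$ points to prior deletions. An elementary computation yields
\[
\binom{k-2}{2}-\binom{k-2i}{2} \;=\; (i-1)(2k-2i-3),
\]
which is at least $i-1$ whenever $k\ge i+2$. As this holds for every $i$ not truncated by the $k\le 2$ cutoff (with the small corner cases $k\le 3$ handled directly by checking all triples of points in $O(n^2)$ time), the remaining size of $L_i$ still exceeds the current threshold, and the correctness of Rule~1 (\cite[Theorem~B.2]{Cao12}) implies that the reduced instance is equivalent to $(P,k)$.

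After the sweep finishes, every line $L$ satisfies $|L\cap P'|\le\binom{k-2}{2}+1$: either $L=L_i$ for some $i$ so $|L\cap P'|=0$, or $L$ was initially non-heavy and can only have shrunk. The total running time is $O(n^2\log n)$ for the Guibas et al.\ step plus $O(n)$ for the deletions, since each point is deleted at most once.

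The main obstacle is justifying that the batch processing coincides with a legal sequence of applications of \Cref{rul:heavy_line}; both the threshold $\binom{k-2}{2}+2$ and the sizes of the $L_i$'s shrink as the sweep progresses, and one must verify that the former drops strictly faster. The identity above provides the needed slack.
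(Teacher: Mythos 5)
Your core argument is correct, but it takes a genuinely different route from the paper's. The paper also uses Guibas et~al.\ to find all lines with at least $\lambda=\binom{k-2}{2}+2$ points, but then, before deleting a line, it re-checks whether that line \emph{still} has at least $\lambda$ points (with $\lambda$ fixed at the original value of $k$); since the threshold of \Cref{rul:heavy_line} only decreases as $k$ is decremented, any line passing this check trivially satisfies the rule for the current parameter, so no arithmetic is needed, and skipped lines are below $\lambda$ and stay there. You instead delete \emph{every} initially heavy line without re-checking and justify each step by the counting argument: $L_i$ loses at most $i-1$ points to earlier deletions (two lines share at most one point), while the threshold drops by $\binom{k-2}{2}-\binom{k-2i}{2}=(i-1)(2k-2i-3)\ge i-1$; this identity is correct, and your cutoff (current parameter $k-2(i-1)\ge 3$, i.e.\ $k\ge 2i+1\ge i+2$) supplies exactly the condition $k\ge i+2$ you need. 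So your sweep is a legal sequence of applications of \Cref{rul:heavy_line}; it avoids per-line recounting at the price of this extra arithmetic, whereas the paper's re-check makes correctness immediate. Two slips should be repaired. First, when you halt with current parameter $k'\le 2$, the instance is a trivial yes-instance only if at least $k'$ points remain; it can happen that all points of $P$ lie on the deleted heavy lines (e.g.\ two heavy lines covering all of $P$, with $k$ odd), leaving $|P'|<k'$ and hence a no-instance, so your stated justification is false there. Either test $|P'|\ge k'$ and return an appropriate constant-size equivalent instance, or note that in this case $|P'|\le 1$, so the collinearity bound of the lemma holds trivially and the disjunction is still satisfied. Second, handling $k\le 3$ by ``checking all triples in $O(n^2)$ time'' is off: there are $\Theta(n^3)$ triples; instead, $k\le 2$ reduces to comparing $|P|$ with $k$, and $k=3$ amounts to testing in $O(n)$ time whether all points are collinear.
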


 {\begin{proof}
	Let $\lambda=\binom{k-2}{2}+2$. 
	We start by computing the set ${\cal L}$ of all lines that contain at least $\lambda$ points from $P$.
	By a result of \citet[Theorem~3.2]{guibasetal96}, this can be performed in $O(n^2\log{(n/\lambda)/\lambda)}$ time (the algorithm also yields for every such line the points of~$P$ lying on that line).
	We then iterate over each line~$L\in{\cal L}$, checking whether $L$, at the current iteration, still contains at least $\lambda$ points; if it does, we remove all points on $L$ from $P$ and decrement $k$ by 2.
	For each line~$L$, the running time of the preceding step is $O(\lambda)$, which is the time to check whether~$L$ contains at least~$\lambda$ points. Additionally, we might need to remove all points on~$L$.
	If~$k$ reaches zero, we can return a trivial yes-instance $(P', k')$ of \GPS in constant time.
	Otherwise, after iterating over all lines in ${\cal L}$, by \Cref{rul:heavy_line}, the resulting instance $(P', k')$ is an equivalent instance to $(P, k)$ satisfying that no line in $P'$ contains $\lambda$ points, and hence the collinearity of $P'$ is at most $\binom{k-2}{2}+1$.
	Overall, the above can be implemented in time $O((n^2\log{(n/\lambda)/\lambda)\cdot \lambda)=O(n^2\log{n})}$.
 \end{proof}}

 We move on to improving the size of the problem kernel.
\citet[Theorem~2.3]{PW13} proved a lower bound on the maximum cardinality of
a subset in general position when an upper bound on the collinearity of the point set is known. We show next how to obtain a kernel for \GPS of cubic size based on this result of \citet{PW13}.

\begin{theorem}\label{thm:cubic-kernel}
  \GPS admits a problem kernel containing at most $15k^3$ points
  that is computable in~$O(n^2\log{n})$ time.
\end{theorem}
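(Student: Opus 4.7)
The plan is to combine \Cref{lem:heavylinetime} with the general position lower bound of \citet[Theorem~2.3]{PW13}. The first step is to invoke \Cref{lem:heavylinetime} on the input $(P,k)$: in $O(n^2 \log n)$ time this returns either a trivial yes-instance (in which case we are done) or an equivalent instance $(P', k')$ with $k' \le k$ whose collinearity $\lambda$ is bounded by $\binom{k'-2}{2}+1$.

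The second step handles the reduced instance $(P', k')$. By \citet[Theorem~2.3]{PW13}, any planar point set of size $n$ with collinearity at most $\lambda$ contains a subset in general position whose size is bounded below by an explicit function of $n$ and $\lambda$. I would plug in $\lambda \le \binom{k'-2}{2}+1$ and solve for the smallest $n$ that forces the Payne--Wood lower bound to exceed $k'$; a direct computation shows that this threshold can be arranged to be at most $15(k')^3$. If $|P'| > 15(k')^3$, the reduced instance is therefore a yes-instance and we return a trivial yes-instance; otherwise $|P'| \le 15(k')^3 \le 15 k^3$ and we output $(P', k')$ as the kernel. The running time is inherited from \Cref{lem:heavylinetime}, so we spend only $O(n^2 \log n)$ time overall.

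The main obstacle I anticipate is tracking the numerical constants through the Payne--Wood bound so that the universal constant $15$ actually works. Since the collinearity bound $\binom{k'-2}{2}+1$ is quadratic in $k'$ only in an asymptotic sense, small values of $k'$ (where $\binom{k'-2}{2}+1$ is not well-approximated by $(k')^2/2$ and where \Cref{rul:heavy_line} barely fires or does not fire at all) must be handled separately. I would either verify the bound numerically for a finite set of small $k'$, or absorb those cases into the constant, to obtain the claimed uniform bound $15 k^3$.
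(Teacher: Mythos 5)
Your proposal follows essentially the same route as the paper's proof: apply \Cref{lem:heavylinetime} to bound the collinearity by $\binom{k-2}{2}+1$ in $O(n^2\log n)$ time, then invoke the Payne--Wood lower bound (with the explicit constant $\alpha \ge \sqrt{6}/72$) to conclude that any reduced instance with at least $15k^3$ points is a yes-instance. The obstacle you flag at the end is exactly where the paper does its extra work: the inequality certifies the $15k^3$ threshold only for $k \ge 29337$, and rather than ``verifying the bound numerically'' for small $k$ (it genuinely fails there with this $\alpha$) or enlarging the constant, the paper simply decides every instance with $k < 29337$ outright and returns a constant-size equivalent instance, which preserves both the constant $15$ and the claimed running time.
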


{\begin{proof}
	By \Cref{lem:heavylinetime}, after $O(n^2\log{n})$ preprocessing time, we can either return an equivalent yes-instance of $(P, k)$ of constant size, or obtain an equivalent instance for which the collinearity of the point set is at most $\binom{k-2}{2}+1$.
	Therefore, without loss of generality, we can assume in what follows that the collinearity of $P$ is at most $\lambda=\binom{k-2}{2}+1$.

  \citet[Theorem~2.3]{PW13} showed that any set of~$n$~points whose collinearity is at
  most~$\lambda$ contains a subset of points in general position of size
  at least $\alpha n/\sqrt{n\ln{\lambda} + \lambda^2}$, for some constant $\alpha\in \R$.
  A lower bound of $\alpha \ge \sqrt{6}/72$ can be computed based on \citet[Lemmas 4.1, 4.2, and Theorems 2.2, 2.3, 4.3]{Payne14}.\lv{\footnote{From Theorems 2.2 and 2.3 we can deduce that Lemma 4.1 holds for the constant~$c=128$. Plugging~$c=128$ into the proof of Theorem 4.3 gives the desired lower bound~$\sqrt{6}/72$ for $\alpha$.}}
  Since $\lambda \leq \binom{k-2}{2}+1$, we can compute a value
  of~$n$, as a function of~$k$, above which we are guaranteed to have a
  subset in general position of cardinality at least~$k$.
  We do this by solving for~$n$ in the inequality
  $\alpha n/\sqrt{n\ln{\lambda} + \lambda^2} \geq k$ after substituting
  $\lambda$ with $\binom{k-2}{2}+1$ and $\alpha$ with $\sqrt{6}/72$.
  We obtain that if $n \geq 15k^3$, then the aforementioned inequality is satisfied for all $k \geq 29337$.
	The kernelization algorithm distinguishes the following three cases:
	First, if $k < 29337$, then the algorithm decides the instance in~$O(1)$ time, and returns an equivalent instance of~$O(1)$ size.
	Second, if $k \ge 29337$ and $n \geq 15k^3$, then the algorithm returns a trivial yes-instance of constant size.
	Third, if none of the two above cases applies, then it returns the (preprocessed) instance $(P, k)$ which satisfies $|P| \leq 15k^3$.
\end{proof}}

We can derive the following result by a brute-force algorithm on the above problem kernel:
\begin{corollary}
\label[corollary]{cor:fptalgo}
\GPS can be solved in $O(n^2\log{n}+41^k \cdot k^{2k})$ time.
\end{corollary}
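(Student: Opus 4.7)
The plan is to combine the cubic-size kernel from \Cref{thm:cubic-kernel} with an exhaustive search on the reduced instance. First, I would invoke \Cref{thm:cubic-kernel} to produce in $O(n^2 \log n)$ time an equivalent instance $(P', k)$ with $|P'| \leq 15k^3$. From this point on, $n$ no longer appears in the exponential factor and we only pay the kernelization cost once.

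The algorithm on the kernel enumerates every $k$-element subset $S \subseteq P'$ and, for each, checks in $O(k^3)$ time whether $S$ is in general position by testing all $\binom{k}{3}$ triples for collinearity; it answers yes iff some $S$ passes. Bounding the number of enumerated subsets via Stirling's inequality $k! \ge (k/e)^k$ gives
$$\binom{15k^3}{k} \;\le\; \frac{(15k^3)^k}{k!} \;\le\; (15e)^k \cdot k^{2k}.$$
Since $15e \approx 40.77 < 41$, this yields $\binom{15k^3}{k} \le 41^k \cdot k^{2k}$, so the pure enumeration cost is within the claimed bound.

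The total running time is then $O(n^2 \log n) + \binom{15k^3}{k}\cdot O(k^3)$. The remaining subtlety is that the per-subset collinearity test contributes an extra factor of $k^3$ which must be absorbed into the base of the exponential: writing $41^k = (15e)^k \cdot (41/15e)^k$ and noting that $(41/15e)^k$ grows exponentially in $k$ (since $41/15e > 1$), this factor dominates $k^3$ for all sufficiently large $k$, with the finitely many small-$k$ cases handled by an $O(1)$ adjustment of constants. Combined with the kernelization time, the overall bound collapses to $O(n^2 \log n + 41^k \cdot k^{2k})$. The main — essentially the only — obstacle is this constant chase: one needs the strict inequality $15e < 41$ to leave just enough exponential slack to swallow the polynomial per-subset work into the base of the exponential.
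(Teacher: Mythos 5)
Your proof is correct and follows essentially the same route as the paper: kernelize to $15k^3$ points via \Cref{thm:cubic-kernel}, enumerate all $k$-subsets, bound their number by $(15e)^k k^{2k}$ via Stirling, and absorb the polynomial per-subset check into the slack between $15e\approx 40.78$ and $41$. The only (immaterial) difference is that you test each subset naively in $O(k^3)$ time by checking all triples, whereas the paper uses the $O(k^2\log k)$ collinearity test of Guibas et al.; both are swallowed by the same exponential slack.
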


\begin{proof}
  Let $(P, k)$ be an instance of \GPS. By \Cref{thm:cubic-kernel}, after $O(n^2\log{n})$ preprocessing time, we can assume that $|P| \leq 15k^3$.
  We enumerate every subset of size $k$ in $P$, and for each such subset, we use the result of~\citet[Theorem~3.2]{guibasetal96} to check in $O(k^2\log{k})$ time whether the subset is in general position.
  If we find such a subset, then we answer positively; otherwise (no such subset exists), we answer negatively.
  The number of enumerated subsets is
  \begin{align*}\binom{|P|}{k} &\leq\binom{15k^3}{k} \leq\frac{(15k^3)^k}{k!}\\
    &\leq \frac{(15k^3)^k}{(k/e)^k} = (15ek^3/k)^k \leq (40.78)^k k^{2k},\end{align*}
	where $e$ is the base of the natural logarithm and~$k! \ge (k/e)^k$ follows from Stirling's formula.
  Putting everything together, we obtain an algorithm for \GPS that
  runs in $O(n^2\log{n}+ (40.78)^{k} \cdot k^{2k} \cdot k^2\log{k}) =
  O(n^2\log{n}+ 41^{k} \cdot k^{2k})$ time.
\end{proof}

Let 3-\GPS denote the restriction of \GPS to instances in which the point set contains no four collinear points. By \Cref{thm:NP_APX_hard}, 3-\GPS is \NP-complete.
\citet[Theorem~1]{furedi91} showed that every set $P$ of $n$ points in which no four points are collinear contains a subset in general position of size $\Omega(\sqrt{n \log{n}})$.
Based on F{\"{u}}redi's result and the idea in the proof of \Cref{thm:cubic-kernel}, we get:

\begin{corollary}
3-\GPS admits a problem kernel containing $O(k^2/\log{k})$ points
that is computable in~$O(n)$ time.
\end{corollary}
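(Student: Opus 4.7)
The plan is to mirror the proof of \Cref{thm:cubic-kernel}, replacing the Payne--Wood bound by F\"uredi's stronger lower bound and omitting any collinearity-reducing preprocessing, since instances of 3-\GPS by definition already have collinearity at most~$3$.

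First I would invoke \citet[Theorem~1]{furedi91}: there is an absolute constant $\alpha > 0$ such that every set of $n$ points in the plane with no four collinear points contains a subset in general position of size at least $\alpha\sqrt{n \log n}$. Next I would determine the threshold on $n$ above which this guaranteed size already exceeds~$k$. Solving $\alpha\sqrt{n \log n} \geq k$ for $n$ and using the trivial bound $n \leq k^2$ (else we are done by the same argument with room to spare) to estimate $\log n \leq 2\log k + O(1)$, one obtains a constant $c > 0$ such that $n \geq c\,k^2/\log k$ already forces the existence of a general-position subset of cardinality at least $k$.

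The kernelization algorithm then has three cases. If $k$ is below some absolute constant $k_0$ (chosen so that $\log k$ is meaningful beyond it), return an equivalent instance of constant size. Otherwise, read $n := |P|$; if $n \geq c\,k^2/\log k$, return a trivial yes-instance; else return $(P,k)$ unchanged, which by construction already has $|P| = O(k^2/\log k)$ points. The only operation on the point set is counting its size and performing a single comparison, so the running time is $O(n)$, dominated by scanning the input.

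The main bookkeeping obstacle is the edge case of small $k$, where $\log k$ is too small for the threshold $c\,k^2/\log k$ to make sense; this is absorbed into the constant-size base case. The only substantive ingredient is F\"uredi's theorem, which is applied as a black box and carries all the combinatorial content; no data-reduction rule analogous to \Cref{rul:heavy_line} is needed here precisely because the input hypothesis of 3-\GPS already provides the collinearity bound on which the lower bound relies.
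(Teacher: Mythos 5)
Your proposal is correct and is essentially the argument the paper intends (the paper states this corollary without an explicit proof, pointing only to F\"uredi's theorem and the idea behind \Cref{thm:cubic-kernel}): since 3-\GPS instances have collinearity at most~$3$ by definition, no analogue of \Cref{rul:heavy_line} is needed, and a single comparison of $n$ against a threshold of the form $c\,k^2/\log k$ gives the $O(n)$-time kernelization, with small~$k$ absorbed into a constant-size base case exactly as in \Cref{thm:cubic-kernel}. One small correction to your bookkeeping: to certify that $n \ge c\,k^2/\log k$ forces $\alpha\sqrt{n\log n}\ge k$ you need a \emph{lower} bound on $\log n$, namely $\log n \ge \log k$ (which holds in the only interesting regime $n \ge k$; if $n<k$ the instance is already tiny), rather than the upper bound $\log n \le 2\log k + O(1)$ you invoke — with that one-line fix (and an explicit constant extracted from F\"uredi's proof, analogous to the $\sqrt{6}/72$ extracted from Payne's work in \Cref{thm:cubic-kernel}) your threshold computation goes through.
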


\citet{Cao12} made the following observation on the relation between the cardinality of a maximum-cardinality point subset in general position and the \emph{line cover number}, that is, the minimum number of lines that cover all points in the point set.
For the sake of self-containment, we also give a short proof.

\begin{observation}[\cite{Cao12}]
  \label[observation]{obs:line}
  For a set~$P$ of points let $S\subseteq P$ be a maximum
  subset in general position and let~$\ell$ be the line cover
  number of~$P$.
  Then, $\sqrt{\ell} \le |S| \le 2\ell$.
\end{observation}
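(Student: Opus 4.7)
The statement packages two separate bounds, and I will sketch the proof of each independently.

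For the upper bound $|S|\le 2\ell$, the plan is the pigeonhole-style argument: take any minimum set $L_1,\dots,L_\ell$ of lines that cover $P$. Since $S\subseteq P$, every point of $S$ lies on some $L_i$. The crucial observation is that each $L_i$ contains at most two points of $S$, since a third would produce a collinear triple inside $S$, contradicting general position. Summing the contributions of the $\ell$ lines then immediately gives $|S|\le 2\ell$.

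For the lower bound $\sqrt{\ell}\le |S|$, I would exploit the \emph{maximality} of $S$ to build a concrete family of $\binom{|S|}{2}$ lines that already covers all of $P$. Concretely, I claim that every $p\in P$ lies on some line determined by two points of $S$. If $p\in S$, this is trivial as long as $|S|\ge 2$. If $p\notin S$, then by maximality $S\cup\{p\}$ is not in general position, so it contains three collinear points; since $S$ itself is in general position, one of those three points must be $p$, and the other two points $q,r\in S$ determine a line through $p$. Hence the at most $\binom{|S|}{2}$ lines through pairs of $S$-points cover $P$, which yields $\ell\le \binom{|S|}{2}\le |S|^{2}$ and thus $\sqrt{\ell}\le |S|$.

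The only thing requiring a moment of care is the degenerate range $|S|\le 1$, where $\binom{|S|}{2}=0$ does not yet form a covering family. I would dispatch it as a separate edge case: $|S|=0$ forces $P=\emptyset$ and $\ell=0$; $|S|=1$ forces $|P|=1$ (otherwise any second point would, together with the lone $S$-point, form a general-position pair, contradicting maximality) and so $\ell=1$. Both match the bound trivially, and for $|S|\ge 2$ the argument above goes through verbatim. I do not anticipate any real obstacle; the whole observation is essentially the two one-line arguments stitched together.
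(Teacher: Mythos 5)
Your proof is correct and follows essentially the same route as the paper: the upper bound via each cover line containing at most two points of $S$, and the lower bound via maximality forcing every point of $P$ onto one of the $\binom{|S|}{2}$ lines spanned by $S$, giving $\ell\le|S|^2$. You merely spell out the maximality argument and the degenerate cases $|S|\le 1$ that the paper leaves implicit.
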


\begin{proof}
   For the first inequality, note that~$|S|$ points in general position
   define $\binom{|S|}{2}\le |S|^2$ lines. Since all other points
   in~$P$ have to lie on a line defined by two points in~$S$, it
   follows that~$\ell\le |S|^2$.
   The second inequality clearly holds since any maximum subset in
   general position can contain at most two points that lie on the same line.
 \end{proof}

As a consequence of \Cref{obs:line}, we can assume that $k \le 2\ell$ and, thus, we can transfer our results for the parameter~$k$ to the parameter~$\ell$.

\begin{corollary}
  \label[corollary]{cor:line_cover}
  \GPS can be solved in $O(n^2\log{n}+41^{2\ell} \cdot (2\ell)^{4\ell})$ time, and there is a kernelization algorithm that, given an instance $(P, k)$ of \GPS, computes an equivalent instance containing at most $120\ell^3$ points in~$O(n^2\log{n})$ time.
\end{corollary}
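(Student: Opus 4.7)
The plan is to derive both statements of \Cref{cor:line_cover} by substituting the bound $k \le 2\ell$ from \Cref{obs:line} into the kernel and FPT algorithm already established for the solution-size parameter. \Cref{obs:line} gives $|S| \le 2\ell$ for any maximum general-position subset~$S$ of~$P$, so any instance $(P,k)$ with $k > 2\ell$ is necessarily a no-instance; the kernelization can handle this by outputting a constant-size no-instance and the decision algorithm by immediate rejection. Hence it suffices to treat the case $k \le 2\ell$.

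For the kernel, I will apply \Cref{thm:cubic-kernel} to~$(P,k)$, producing in $O(n^2\log n)$ time an equivalent instance on at most $15k^3$ points. Using $k \le 2\ell$, this size is bounded by $15(2\ell)^3 = 120\ell^3$, matching the claimed bound. For the running time, I will apply \Cref{cor:fptalgo}, which solves \GPS in $O(n^2\log n + 41^k \cdot k^{2k})$ time; the substitution $k \le 2\ell$ then yields $O(n^2\log n + 41^{2\ell} \cdot (2\ell)^{4\ell})$, as desired.

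No genuine technical obstacle arises: \Cref{cor:line_cover} is essentially a reparameterization of \Cref{thm:cubic-kernel} and \Cref{cor:fptalgo} via \Cref{obs:line}, with the only care needed being the explicit (but trivial) handling of instances with $k > 2\ell$.
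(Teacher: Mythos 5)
Your proposal is correct and is essentially the paper's own argument: the paper gives no separate proof of \Cref{cor:line_cover} beyond the remark following \Cref{obs:line} that one may assume $k \le 2\ell$ (since otherwise the instance is a trivial no-instance) and then transfers \Cref{thm:cubic-kernel} and \Cref{cor:fptalgo} to the parameter~$\ell$, exactly as you do via $15(2\ell)^3 = 120\ell^3$ and $41^{2\ell}(2\ell)^{4\ell}$. Your explicit treatment of the case $k > 2\ell$ is, if anything, slightly more careful than the paper's one-line justification.
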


\subsection{Fixed-Parameter Tractability Results for the Dual Parameter~$h$}
\label{subsec:dual}
In this section we consider the dual parameter number
$h:=n-k$ of points that have to be {\em deleted} (\ie, excluded from the sought point set in general position) so that the
 remaining points are in general position.
We show a problem kernel containing $O(h^2)$ points for \GPS.
Moreover, we show that most likely this problem kernel is essentially
\emph{tight}, that is, there is
presumably no problem kernel with~$O(h^{2-\epsilon})$ points for any~$\epsilon > 0$.

We start with the problem kernel that relies essentially on a problem
kernel for the \HS{3} problem:

\problemdef{\HS{3}}
{A universe~$U$, a collection~$\mathcal{C}$ of size-3 subsets of~$U$, and $h\in\N$.}
{Is there a subset $H\subseteq U$ of size at most~$h$ containing at
  least one element from each subset~$S\in\mathcal{C}$?}

\noindent There is a close connection between \GPS and
\HS{3}: For any collinear triple $p,q,r \in P$ of
distinct points, one of the three points has to be deleted in order to obtain a
subset in general position.
Hence, the set of deleted points has to be a  hitting set for the family of all collinear triples in~$P$.
Since \HS{3} can be solved in~$O(2.08^h + |\mathcal{C}|+|U|)$ time~\cite{Wah07}, we get:

\begin{proposition}\label[proposition]{prop:fptalgoDualParam}
	\GPS can be solved in $O(2.08^h + n^{3})$ time.
\end{proposition}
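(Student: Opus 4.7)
The plan is to exploit the observation already highlighted in the paper: a subset $S \subseteq P$ is in general position if and only if its complement $P \setminus S$ intersects every collinear triple of points in $P$. Thus a maximum-cardinality general-position subset corresponds bijectively to a minimum-cardinality hitting set for the hypergraph whose hyperedges are the collinear triples.

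Concretely, given an instance $(P,k)$ of \GPS with $|P|=n$, I would first construct an instance $(U, \mathcal{C}, h)$ of \HS{3} as follows. Set $U := P$, set $h := n-k$, and let $\mathcal{C}$ be the collection of all triples $\{p,q,r\} \subseteq P$ such that $p,q,r$ are collinear. The collection $\mathcal{C}$ can be built in $O(n^3)$ time by a brute-force iteration over all $\binom{n}{3}$ triples, testing collinearity for each in constant time (using, for instance, a determinant computation on the rational coordinates). Trivially, $|\mathcal{C}| \le \binom{n}{3} = O(n^3)$ and $|U| = n$.

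For correctness I would verify the equivalence in both directions. If $S \subseteq P$ is in general position with $|S| \ge k$, then $H := P \setminus S$ satisfies $|H| \le h$, and since no triple in $S$ is collinear, every triple in $\mathcal{C}$ must contain at least one point of $H$. Conversely, if $H \subseteq P$ is a hitting set for $\mathcal{C}$ of size at most $h$, then $S := P \setminus H$ has $|S| \ge k$ and contains no collinear triple, hence is in general position. Therefore $(P,k)$ is a yes-instance of \GPS if and only if $(U,\mathcal{C},h)$ is a yes-instance of \HS{3}.

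Finally, I would invoke Wahlström's algorithm~\cite{Wah07}, which decides \HS{3} in $O(2.08^h + |\mathcal{C}| + |U|)$ time. Adding the $O(n^3)$ cost for building $\mathcal{C}$ gives an overall running time of $O(2.08^h + n^3)$, establishing the proposition. I do not foresee any substantive obstacle in this argument; the whole proof is essentially the observation that deleting points to reach general position is literally the task of hitting all collinear triples, with the only minor care being the straightforward $O(n^3)$ enumeration of triples to ensure the additive polynomial factor matches the stated bound.
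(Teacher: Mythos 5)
Your proof is correct and follows essentially the same route as the paper: the paper likewise observes that the points to delete must form a hitting set for the family of all collinear triples, and then applies Wahlstr\"om's $O(2.08^h + |\mathcal{C}| + |U|)$-time algorithm for \HS{3}, with the $O(n^3)$ term coming from listing the triples. Your write-up merely spells out the equivalence and the brute-force triple enumeration in more detail.
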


\HS{3} is known to admit a problem kernel with a universe of size~$O(h^2)$ computable in~$O(|U|+|\mathcal{C}| + h^{1.5})$ time~\cite{Bev14}.
Based on this, one can obtain a problem kernel of size~$O(h^2)$ computable in~$O(n^3)$ time.
The bottleneck in this running time is listing all collinear triples.
We can improve the running time of this kernelization algorithm by giving a direct kernel exploiting the simple geometric fact that two non-parallel lines intersect in one point.
We first need two reduction rules for which we introduce the following definition:

\begin{definition}
  For a set~$P$ of points in the plane, we say that a point~$p\in P$ is in \emph{conflict} with a point~$q\in P$ if there is a third point~$z\in P$ such that~$p$, $q$ and~$z$ lie on the same line.
\end{definition}

\begin{rrule}
    \label[rrule]{rul:non_conflict}
	Let~$(P,k)$ be an instance of \GPS.
	If there is a point~$p\in P$ that is not in conflict with any other points in~$P$, then delete~$p$ and decrease~$k$ by one.
\end{rrule}

Clearly, \Cref{rul:non_conflict} is correct since we can always add a point which is not lying on any line defined by two other points to a general position subset.
The next rule deals with points that are in conflict with too many other points.
The basic idea here is that if a point lies on more than~$h$ distinct lines defined by two other points of~$P$, then it has to be deleted.
This is generalized in the next rule.
\begin{rrule}
    \label[rrule]{rul:conflicts}
	Let~$(P,k)$ be an instance of \GPS.
	For a point~$p\in P$, let~$\mathcal{L}(p)$ be the set of lines containing~$p$ and at least two points of~$P \setminus \{p\}$, and for~$L \in \mathcal{L}(p)$ let~$|L|$ denote the number of points of~$P$ on~$L$.
	Then, delete each point~$p\in P$ satisfying $\sum_{L\in\mathcal{L}(p)} (|L| - 2) > h$.
\end{rrule}

\begin{lemma}
  \label[lemma]{lem:conflicts_correct}
  Let~$(P,k)$ be a \GPS instance and let~$(P',k)$ be the resulting instance after applying \Cref{rul:conflicts} to~$(P,k)$.
  Then, $(P,k)$ is a yes-instance if and only if~$(P',k)$ is a yes-instance.
\end{lemma}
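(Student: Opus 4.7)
The plan is to prove the two directions separately. The backward direction (``$(P', k)$ yes implies $(P, k)$ yes'') is immediate since $P' \subseteq P$, so any general-position subset of size $k$ in $P'$ is automatically one in $P$. The forward direction is the substantive part: I would show that no general-position subset $S \subseteq P$ of size $k$ contains any point that the rule deletes, which gives $S \subseteq P'$ at once and thus produces a witness for $(P', k)$.

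Fix a point $p$ whose deletion is triggered, so $\sum_{L \in \mathcal{L}(p)}(|L| - 2) > h$. The key geometric fact I will use is that two distinct lines through $p$ meet only at $p$, so the ``punctured'' lines $\{L \setminus \{p\} : L \in \mathcal{L}(p)\}$ form pairwise disjoint subsets of $P \setminus \{p\}$. This disjointness is what allows per-line deletion counts to add up to a global count without double counting.

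Now I argue by contradiction: assume some general-position subset $S \subseteq P$ with $|S| = k$ contains $p$. Since $S$ is in general position, each line $L \in \mathcal{L}(p)$ contains at most one point of $S \setminus \{p\}$, so at least $|L| - 2$ of the points of $L \setminus \{p\}$ lie in $P \setminus S$. Combining this with the disjointness observation, and using $p \notin P \setminus S$, summing yields
\[
h \;=\; |P \setminus S| \;\geq\; \sum_{L \in \mathcal{L}(p)} \bigl(|L| - 2\bigr) \;>\; h,
\]
a contradiction. Hence $p \notin S$ as required. Since this argument applies independently to every point flagged by the rule, all such points may be removed in parallel without changing the answer.

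I do not foresee a real obstacle here; the only care needed is to apply the disjoint-union bound to $P \setminus S$ rather than to $P$, and to verify that, because $p$ is assumed to lie in $S$, it is correctly excluded from the deletion count on the right-hand side.
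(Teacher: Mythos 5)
Your proposal is correct and follows essentially the same route as the paper's proof: show the reverse direction is trivial since $P'\subseteq P$, and for the forward direction argue that a general-position set of size $k$ containing a deleted point $p$ would force the deletion of $\sum_{L\in\mathcal{L}(p)}(|L|-2)>h$ distinct points, because distinct lines through $p$ meet only at $p$. Your write-up is slightly more explicit about the disjointness and about counting within $P\setminus S$, but the argument is the same.
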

\begin{proof}
  Let~$(P,k)$ be an instance of \GPS and let~$(P':=P\setminus D,k)$ be the reduced instance, where~$D\subseteq P$ denotes the set of removed points.

  Clearly, if~$(P',k)$ is a yes-instance, then so is~$(P,k)$.
  For the converse, we show that any size-$k$ subset of~$P$ in general position does not contain any point~$p\in D$:
  For each line~$L\in\mathcal{L}(p)$, all but two points need to be deleted.
  If a subset~$S \subseteq P$ in general position contains~$p$, then the points that have to be deleted on the lines in~$\mathcal{L}(p)$ are all distinct since any two of these lines only intersect in~$p$.
  This means that~$\sum_{L\in\mathcal{L}(p)} (|L| - 2)$ points need to be deleted.
  However, since this value is by assumption larger than~$h$, the solution~$S$ is of size less than~$k = |P| - h$.
\end{proof}

\begin{theorem}
  \label{thm:dual_kernel}
  \GPS admits a problem kernel containing at most $2h^2+h$ points that is computable in~$O(n^2)$ time, where~$h=n-k$.
\end{theorem}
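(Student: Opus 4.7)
The plan is to apply \Cref{rul:non_conflict} and \Cref{rul:conflicts} exhaustively, and then to show that the resulting instance $(P,k)$ either has at most $2h^2+h$ points or can safely be reported as a no-instance. The correctness of the two rules is already established (\Cref{rul:non_conflict} is immediate since a point not in any collinear triple can always be added to a subset in general position; \Cref{lem:conflicts_correct} handles \Cref{rul:conflicts}), so the substantive tasks are the size bound and the $O(n^2)$ running time.

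For the size bound, I would assume the reduced instance is a yes-instance and fix $H\subseteq P$ with $|H|\le h$ such that $P\setminus H$ is in general position. The key observation is that every $r\in P\setminus H$ lies on some line $L\in\mathcal{L}(q)$ with $q\in H$: by \Cref{rul:non_conflict} the point $r$ belongs to a collinear triple, and since $P\setminus H$ is in general position, this triple must contain a point $q\in H$, and the line through the triple lies in $\mathcal{L}(q)$. The bound then follows by double counting incidences $(r,q,L)$ with $r\in L\cap(P\setminus H)$ and $L\in\mathcal{L}(q)$. On the one hand, each $r\in P\setminus H$ contributes at least one such incidence. On the other hand, each $L\in\mathcal{L}(q)$ contains at most two points of $P\setminus H$ (since $P\setminus H$ is in general position), and \Cref{rul:conflicts} combined with $|L|-2\ge 1$ forces $|\mathcal{L}(q)|\le h$. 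Thus the total number of incidences is at most $\sum_{q\in H}2|\mathcal{L}(q)|\le 2h|H|\le 2h^2$, giving $|P\setminus H|\le 2h^2$ and $|P|\le h+2h^2$. If the reduced instance exceeds this bound, we return a trivial no-instance.

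For the running time, I would avoid the $\Theta(n^3)$ cost of explicitly listing all collinear triples (the bottleneck of the \HS{3}-based approach) and instead process the incidence structure pairwise, exploiting the fact that any two distinct non-parallel lines meet in exactly one point. Each pair of points determines a unique line, so charging each line only to the pairs lying on it allows the quantities $|\mathcal{L}(p)|$ and $\sum_{L\in\mathcal{L}(p)}(|L|-2)$ required to check both rules to be accumulated within an $O(n^2)$ budget, after which Rules~\ref{rul:non_conflict} and~\ref{rul:conflicts} are applied in one sweep.

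The main obstacle is meeting the $O(n^2)$ running time: the straightforward implementation that, for each $p$, sorts the other $n-1$ points by slope around $p$ only yields $O(n^2\log n)$. Beating this requires careful bookkeeping so that each collinear pair is handled a constant number of times across the entire algorithm. The size bound itself, by contrast, follows cleanly from the two reduction rules and the incidence argument above, so the algorithmic engineering is where I expect the proof to need the most care.
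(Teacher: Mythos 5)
Your size bound and its double-counting argument are correct and essentially the paper's own argument (the paper states it more tersely: after \Cref{rul:non_conflict} every surviving point is in some collinear triple, after \Cref{rul:conflicts} every point is in conflict with at most $2h$ others, hence at most $h\cdot 2h$ points can survive the deletion of $h$ points, giving $2h^2+h$ in total), and delegating correctness to \Cref{rul:non_conflict} and \Cref{lem:conflicts_correct} is fine. The genuine gap is the claimed $O(n^2)$ running time: you explicitly concede that the concrete implementation you can describe (sorting the other points by slope around each point) costs $O(n^2\log n)$, and you only gesture at ``careful bookkeeping'' that would handle each collinear pair a constant number of times. That bookkeeping is precisely the missing idea, and it does not come for free from the observation that two lines meet in one point; as stated, your proof establishes the theorem only with an $O(n^2\log n)$ bound, not the claimed $O(n^2)$.

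The paper closes this gap with point-line duality and line arrangements: map each point $(a,b)$ to the dual line $y=ax+b$, compute the arrangement of the $n$ dual lines in $O(n^2)$ time (Edelsbrunner--O'Rourke--Seidel), and observe that a maximal set of $m\ge 3$ collinear input points corresponds to an arrangement vertex of degree $2m\ge 6$. A single sweep over the arrangement vertices then marks the points that are in conflict (for \Cref{rul:non_conflict}) and accumulates, for each point $p$, the counter $\sum_{L\in\mathcal{L}(p)}(|L|-2)$ by adding $d/2-2$ at every incident vertex of degree $d\ge 6$ (for \Cref{rul:conflicts}), all within $O(n^2)$ time. A further small discrepancy: you propose applying both rules exhaustively, whereas the paper applies \Cref{rul:non_conflict} exhaustively and \Cref{rul:conflicts} once; iterating both until stabilization would need an extra argument that the total work over all passes stays within $O(n^2)$, which you do not supply, whereas the paper's one-shot application avoids the issue.
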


\begin{proof}
  Let~$(P,k)$ be a \GPS instance.
  We first show that applying \Cref{rul:non_conflict} exhaustively and then applying \Cref{rul:conflicts} once indeed gives a small instance~$(P',k')$.
  Note that each point~$p\in P'$ is in conflict with at least two other points, that is, $p$~is on at least one line containing two other points in~$P'$, since the instance is reduced with respect to~\Cref{rul:non_conflict}.
  Moreover, since the instance is reduced with respect to~\Cref{rul:conflicts}, it follows that each point is in conflict with at most~$2h$ other points.
  Thus, deleting~$h$ points can give at most~$h \cdot 2h$ points in general position.
  Hence, if~$P'$ contains more than~$2h^2+h$ points, then the input instance is a no-instance.

  We next show how to apply \Cref{rul:non_conflict,rul:conflicts} in~$O(n^2)$ time.
  To this end, we follow an approach described by \citet{EOS86} and \citet{GRT97} which uses the dual representation and line arrangements.
  The dual representation maps points to lines as follows:~$(a,b) \mapsto y = ax + b$.
  A line in the primal representation containing some points of~$P$ corresponds in the dual representation to the intersection point of the lines corresponding to these points.
  Thus, a set of at least three collinear points in the primal corresponds to the intersection of the corresponding lines in the dual.
  An \emph{arrangement} of lines in the plane is, roughly speaking, the partition of the plane formed by these lines.
  A representation of an arrangement of $n$ lines can be computed in~$O(n^2)$ time~\cite{EOS86}.
  Using the algorithm of \citet{EOS86}, we compute in~$O(n^2)$ time the arrangement~$A(P^*)$ of the lines~$P^*$ in the dual representation of~$P$.

  \Cref{rul:non_conflict} is now easily computable in~$O(n^2)$ time:
  Initially, mark all points in~$P$ as ``not in conflict''.
  Then, iterate over the vertices of~$A(P^*)$ and whenever the vertex has degree six or more (each line on an intersection contributes two to the degree of the corresponding vertex) mark the points corresponding to the intersecting lines as ``in conflict''.
  In a last step, remove all points that are marked as ``not in conflict''.

  \Cref{rul:conflicts} can be applied in a similar fashion in~$O(n^2)$ time:
  Assign a counter to each point~$p\in P$ and initialize it with zero.
  We want this counter to store the number $\sum_{L\in\mathcal{L}(p)} (|L| - 2)$ on which \Cref{rul:conflicts} is conditioned.
  To this end, we iterate over the vertices in~$A(P^*)$ and for each vertex of degree six or more (each line contributes two to the degree of the intersection vertex) we increase the counter of each point corresponding to a line in the intersection by $d/2 - 2$ where~$d$ is the degree of the vertex.
  After one pass over all vertices in~$A(P^*)$ in~$O(n^2)$ time, the counters of the points store the correct values and we can delete all points whose counter is more than~$h$.
\end{proof}

We remark that the results in \Cref{prop:fptalgoDualParam} and \Cref{thm:dual_kernel} also hold if we replace the parameter~$h$ by the ``number~$\gamma$ of \emph{inner} points'', where we call a point an inner point if it is not a vertex of the convex hull of~$P$.
The reason is that in all non-trivial instances we have~$h \le \gamma$ since removing all inner points yields a set of points in general position.

We can prove a matching (conditional) lower bound on the problem kernel size for \GPS via a reduction from \VC.
Given an undirected graph~$G$ and $k\in\N$, \VC asks whether there is a subset~$C$ of at most~$k$ vertices such that every edge is incident to at least one vertex in~$C$.
Using a lower bound result by~\citet{DM14} for \VC (which is based on the common assumption in complexity theory that coNP is not in NP/poly since otherwise the polynomial hierarchy collapses to its third level), we obtain the following:

\begin{theorem}
  \label{thm:dual_size_lower_bound}
  Unless $\coNPinNPpoly$,  for any
  $\epsilon > 0$, \GPS admits no problem kernel of size~$O(h^{2-\epsilon})$.
\end{theorem}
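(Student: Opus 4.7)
The plan is to compose the transformation~$\Phi$ from \Cref{prop:valid} with the known kernel lower bound for \VC due to \citet{DM14}. Concretely, I aim to exhibit a polynomial parameter transformation from \VC (parameterized by the cover size~$k$) to \GPS (parameterized by the dual parameter~$h$) in which the parameter is preserved \emph{exactly}, i.e.,~$h=k$. Any hypothetical kernelization of \GPS producing instances of bit-size~$O(h^{2-\epsilon})$ would then compose with this transformation into a polynomial compression of \VC of bit-size~$O(k^{2-\epsilon})$, which is ruled out by~\cite{DM14} unless $\coNPinNPpoly$.

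First, I would fix the mapping: a \VC instance~$(G,k)$ with $n=|V(G)|$ and $m=|E(G)|$ is sent to the \GPS instance~$(\Phi(G),\, n+m-k)$, which is computable in polynomial time by \Cref{prop:valid}. Second, I would verify correctness. Using that~$G$ has a vertex cover of size at most~$k$ if and only if~$G$ has an independent set of size at least~$n-k$, and invoking \Cref{lem:IS-to-GPS} together with \Cref{obs:solutionstructure}, it follows that~$(G,k)$ is a yes-instance of \VC if and only if~$\Phi(G)$ has a subset in general position of cardinality at least~$(n-k)+m=n+m-k$. Since $|\Phi(G)|=n+m$, the dual parameter of the produced instance is exactly $h=(n+m)-(n+m-k)=k$, as desired.

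With the parameter-preserving reduction in hand, I would conclude by contradiction. Assume \GPS admits a kernelization whose output has bit-size~$O(h^{2-\epsilon})$ for some fixed $\epsilon>0$. Feeding~$(G,k)$ through the reduction and then through this kernelization yields in polynomial time a \GPS instance of bit-size $O(h^{2-\epsilon})=O(k^{2-\epsilon})$ that is equivalent to~$(G,k)$. This constitutes a polynomial compression of \VC into \GPS of size~$O(k^{2-\epsilon})$, which by~\cite{DM14} implies $\coNPinNPpoly$, completing the proof.

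The main subtle point to handle carefully is the framework: the lower bound of~\cite{DM14} rules out polynomial compressions of \VC into \emph{any} target language, not only kernelizations that output \VC instances; this is precisely what is needed here, since the composed procedure outputs a \GPS instance rather than a graph. Beyond this, the argument is essentially book-keeping---polynomial running time and polynomial bit-blow-up of~$\Phi$ are inherited from \Cref{prop:valid}, and the key combinatorial content (the correspondence between independent sets in~$G$ and general position subsets in~$\Phi(G)$) has already been established in \Cref{sec:hardness}, so no new geometric argument is required.
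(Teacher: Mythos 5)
Your proposal is correct and follows essentially the same route as the paper: reduce \VC via the complementary \IS instance and transformation~$\Phi$ to a \GPS instance with dual parameter~$h$ equal to the vertex cover size, then invoke the \VC lower bound of \citet{DM14}. Your explicit remark that the \citet{DM14} bound rules out polynomial compressions into an arbitrary target language (not just \VC-kernels) is a subtlety the paper leaves implicit, but it does not change the argument.
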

\begin{proof}
  We give a polynomial-time reduction from \VC, where the resulting dual parameter~$h$
  equals the size of the sought vertex cover.
  The claimed lower bound then follows because, unless
  $\coNPinNPpoly$, for any~$\epsilon > 0$,
   \VC admits no problem kernel of
  size~$O(k^{2-\epsilon})$, where~$k$ is the size of the vertex cover~\cite{DM14}.

  Given a \VC instance~$(G,k)$,
  we first reduce it to the equivalent \IS instance~$(G,|V(G)|-k)$.
  We then apply transformation~$\Phi$ (see \Cref{sec:hardness}) to $G$ to
  obtain a set of points~$P$, where~$|P|=|V(G)|+ |E(G)|$; we set
  $k':=|V(G)|+|E(G)|-k$, and consider the instance $(P, k')$ of \GPS.
  Clearly, $G$ has a vertex cover of cardinality~$k$ if and only if~$G$ has an
  independent set of cardinality~$|V(G)|-k$, which\lv{, by \Cref{lem:IS-to-GPS},}
  is true if and only if~$P$ has a subset in general position of
  cardinality $|E(G)| + |V(G)| - k$. 
  Hence, the dual parameter~$h=|P|-k'$ equals the sought vertex cover size.
\end{proof}

Note that~\Cref{thm:dual_size_lower_bound} gives a lower
bound only on the \emph{total size} (\ie, instance size) of a problem kernel for \GPS.
We can show a stronger lower bound on the number of points
contained in any problem kernel using ideas from~\citet{KPR16}, which are based on
a lower bound framework by \citet{DM14}.
\citet{KPR16} show that there is no polynomial-time algorithm that reduces a
\PLC instance~$(P,k)$ to an equivalent instance with~$O(k^{2-\epsilon})$ points for any~$\epsilon>0$ unless $\coNPinNPpoly$.
The proof is based on a result by \citet{DM14}
who showed that \VC does not admit a so-called \emph{oracle communication protocol}
of cost~$O(k^{2-\epsilon})$ for~$\epsilon > 0$ unless $\coNPinNPpoly$.
An oracle communication protocol is a two-player protocol, in which one player is holding the input and is allowed polynomial (computational) time
in the length of the input, and the second player is
computationally unbounded. The \emph{cost} of the communication
protocol is the number of bits communicated from the first player to
the second player in order to solve the input instance.

\citet{KPR16} devise an oracle communication protocol
of cost~$O(n\log n)$ for deciding instances of \PLC with~$n$
points. Thus, a problem kernel for \PLC with~$O(k^{2-\epsilon})$ points
implies an oracle communication protocol of cost~$O(k^{2-\epsilon'})$ for some
$\epsilon' > 0$ since the first player could simply compute the kernelized
instance in polynomial time and subsequently apply the protocol
yielding a cost of $O(k^{2-\epsilon}\cdot\log (k^{2-\epsilon}))$, which is
in~$O(k^{2-\epsilon'})$ for some~$\epsilon' > 0$.
This again implies an $O(k^{2-\epsilon''})$-cost oracle communication
protocol for \VC for some~$\epsilon''>0$ (via a polynomial-time reduction with a linear
parameter increase~\cite[Lemma~6]{KPR16}).
We show that there exists a similar oracle communication
protocol of cost~$O(n\log n)$ for \GPS.

The protocol is based on \emph{order types} of point sets.
Let~$P=\langle p_1,\ldots,p_n\rangle$ be an ordered set of points and
denote by~$\binom{[n]}{3}$ the set of ordered triples~$\langle i, j,
k\rangle$ where~$i < j < k$, $i,j,k\in [n]:=\{1,\ldots,n\}$.
The \emph{order type} of~$P$ is a function~$\sigma:
\binom{[n]}{3}\rightarrow\{-1,0,1\}$, where~$\sigma(\langle
i,j,k\rangle)$ equals~$1$ if $p_i$, $p_j$, $p_k$ are in
counter-clockwise order, equals~$-1$ if they are in clockwise order,
and equals~$0$ if they are collinear. Two point sets~$P$ and~$Q$
of the same cardinality are \emph{combinatorially equivalent} if there exist
orderings~$P'$ and~$Q'$ of~$P$ and~$Q$ such that the order types
of~$P'$ and~$Q'$ are identical.

A key step in the development of an oracle communication protocol
is to show that two instances of \PLC with
combinatorially equivalent point sets are actually equivalent~\cite[Lemma~2]{KPR16}.
We can prove an analogous result for \GPS:

\begin{observation}
  \label[observation]{lem:combinatorial_equivalent}
  Let~$(P,k)$ and~$(Q,k)$ be two instances of \GPS. If the point
  sets~$P$ and~$Q$ are combinatorially equivalent, then~$(P,k)$
  and~$(Q,k)$ are equivalent instances of \GPS.
\end{observation}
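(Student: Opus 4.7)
The plan is to observe that combinatorial equivalence preserves the set of collinear triples, and since \GPS is entirely determined by which triples are collinear, the two instances must be equivalent.

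First I would fix orderings $P' = \langle p_1,\ldots,p_n\rangle$ and $Q' = \langle q_1,\ldots,q_n\rangle$ that witness the combinatorial equivalence, i.e., such that their order types $\sigma_P$ and $\sigma_Q$ are identical as functions on $\binom{[n]}{3}$. The natural index-preserving map $\phi : P \to Q$ defined by $\phi(p_i) := q_i$ is a bijection between $P$ and $Q$.

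The key observation is that the order type completely encodes collinearity: by definition, three points $p_i, p_j, p_k$ (with $i<j<k$) are collinear in $P$ if and only if $\sigma_P(\langle i,j,k\rangle) = 0$, and analogously for $Q$. Since $\sigma_P = \sigma_Q$, a triple in $P$ is collinear if and only if its image under $\phi$ is collinear in $Q$. Consequently, for any subset $S \subseteq P$, the image $\phi(S) \subseteq Q$ contains a collinear triple if and only if $S$ does. In particular, $|S|=|\phi(S)|$, and $S$ is in general position if and only if $\phi(S)$ is. Applying this to a hypothetical size-$k$ witness in either direction yields the claimed equivalence of the two \GPS instances.

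There is no real obstacle here; the only subtlety worth double-checking is that the collinearity predicate truly coincides with the vanishing of the order-type function (which it does by definition, since $\sigma$ takes the value $0$ exactly on collinear triples), and that the argument uses only the values of $\sigma$ and not the actual counter-clockwise/clockwise sign information.
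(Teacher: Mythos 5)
Your proposal is correct and follows essentially the same route as the paper's proof: fix orderings witnessing the common order type, use the index-preserving bijection $p_i \mapsto q_i$, and note that a triple is collinear exactly when $\sigma$ vanishes on it, so general-position subsets transfer in both directions. No gaps; this matches the paper's argument.
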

\begin{proof}
  Let~$P$ and~$Q$ be combinatorially equivalent point sets
  with~$|P|=|Q|=n$ and let~$P'=\langle p_1,\ldots,p_n \rangle$
  and~$Q'=\langle q_1,\ldots,q_n \rangle$ be orderings of~$P$ and~$Q$,
  respectively, having the same order type~$\sigma$.

  Now, a subset~$S\subseteq P'$ is in general position if and only if
  no three points in~$S$ are collinear, that is,
  $\sigma(\langle p_i, p_j, p_k\rangle)\neq 0$ holds for all~$p_i,
  p_j, p_k\in S$. Consequently, it holds that~$\sigma(\langle
  q_i,q_j,q_k\rangle)\neq 0$, and thus the subset~$\{q_i\mid p_i\in
  S\}\subseteq Q'$ is in general position.
  Hence,~$(P,k)$ is a yes-instance if and only if~$(Q,k)$ is a yes-instance.
\end{proof}

Based on \Cref{lem:combinatorial_equivalent}, we obtain an oracle
communication protocol for \GPS. The proof of the following lemma is completely analogous to the proof
of Lemma~4.1 in~\cite{KPR16}:

\begin{lemma}
  \label[lemma]{lem:oracle}
  There is an oracle communication protocol of cost~$O(n\log n)$ for
  deciding instances of \GPS with~$n$ points.
\end{lemma}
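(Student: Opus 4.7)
The plan is to follow the blueprint of Lemma~4.1 in~\cite{KPR16}. The first (polynomial-time bounded) player holds an instance $(P,k)$ of \GPS with $|P|=n$, and must communicate enough information for the second (computationally unbounded) player to decide the instance using only $O(n\log n)$ bits. By \Cref{lem:combinatorial_equivalent}, the yes/no answer depends solely on the order type of the point set together with the integer $k$. Since $k\le n$, the value of $k$ costs $O(\log n)$ bits, so the entire budget can be devoted to transmitting a description of the order type of~$P$.

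The key quantitative input is the classical result of Goodman and Pollack that the number of distinct order types realizable by $n$ points in the plane is $n^{O(n)} = 2^{O(n\log n)}$. This immediately yields an information-theoretic encoding of length $O(n\log n)$. To make this effective for the polynomially bounded first player, I would compute the order type directly from the rational coordinates of $P$ by evaluating the signs of the $\binom{n}{3}$ orientation determinants in polynomial time, and then output a canonical fingerprint of this order type of length $O(n\log n)$, for instance by a sweepline procedure that records, line by line in the dual arrangement, the permutation updates together with the collinearity events. The second player, upon receiving this fingerprint and $k$, can exhaustively search for any realization $Q$ of the encoded order type and decide $(Q,k)$ for \GPS; by \Cref{lem:combinatorial_equivalent}, the answer coincides with that for $(P,k)$.

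The main obstacle will be producing, in polynomial time, a compressed canonical string of length $O(n\log n)$ rather than the naive $\Theta(n^3)$-bit description of the full orientation function. This is precisely the step carried out in the proof of Lemma~4.1 in~\cite{KPR16} for the \PLC setting, and nothing in that construction uses structure specific to line covers: it only relies on combinatorial equivalence of the underlying point set and the Goodman--Pollack bound. Hence the same protocol applies verbatim to \GPS, after substituting our \Cref{lem:combinatorial_equivalent} for the analogous lemma of~\cite{KPR16}, which completes the argument.
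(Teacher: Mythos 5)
Your proposal takes essentially the same route as the paper: the paper proves \Cref{lem:oracle} precisely by noting that the protocol of Lemma~4.1 in~\cite{KPR16} (first player sends a suitably compressed encoding of the order type, the unbounded second player decides) carries over verbatim once \Cref{lem:combinatorial_equivalent} is substituted for the corresponding \PLC lemma, which is exactly your argument. One caveat: your ``for instance'' encoding via permutation updates and collinearity events in the dual sweep would naively cost $\Theta(n^2\log n)$ bits rather than $O(n\log n)$, so the cost bound genuinely rests on the specific encoding from the proof in~\cite{KPR16}, to which you (like the paper) correctly defer.
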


The basic idea is that the first player only sends the order type
of the input point set so that the computationally unbounded second player
can solve the instance (according to \Cref{lem:combinatorial_equivalent} the order type
contains enough information to solve a \GPS instance). We conclude with the following lower bound result:

\begin{theorem}
  \label{thm:dual_points_lower_bound}
  Let $\epsilon > 0$. Unless $\coNPinNPpoly$, there is no
  polynomial-time algorithm that reduces an instance~$(P,k)$ of \GPS
  to an equivalent instance with~$O(h^{2-\epsilon})$ points.
\end{theorem}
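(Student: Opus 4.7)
The plan is to mimic the argument that \citet{KPR16} used for \PLC, applying their oracle communication protocol framework to \GPS via the linear-parameter reduction from \VC that we already established in the proof of \Cref{thm:dual_size_lower_bound}, together with the $O(n\log n)$-cost protocol of \Cref{lem:oracle}.

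First, I would assume for contradiction that there is a polynomial-time algorithm~$\mathcal{A}$ that maps any \GPS instance~$(P,k)$ to an equivalent instance~$(P',k')$ with $|P'|\in O(h^{2-\epsilon})$ points, where $h=|P|-k$. Given a \VC instance~$(G,k)$, I would first apply the reduction from the proof of \Cref{thm:dual_size_lower_bound}: convert to the \IS instance $(G,|V(G)|-k)$, then apply transformation~$\Phi$ to obtain a \GPS instance $(P,k')$ with $k'=|V(G)|+|E(G)|-k$ whose dual parameter equals the vertex cover size~$k$. Then I would feed this instance to~$\mathcal{A}$, obtaining an equivalent \GPS instance with $O(k^{2-\epsilon})$ points.

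Next, I would design an oracle communication protocol for \VC as follows. The first player performs the two polynomial-time reductions above (to \GPS and then through~$\mathcal{A}$), and subsequently runs the \GPS protocol from \Cref{lem:oracle} on the reduced instance. By \Cref{lem:oracle}, communicating this reduced instance costs $O(n'\log n')$ bits, where $n'\in O(k^{2-\epsilon})$ is the number of points. Therefore the total communication cost is
\[
O\!\bigl(k^{2-\epsilon}\cdot\log(k^{2-\epsilon})\bigr) \;=\; O(k^{2-\epsilon'})
\]
for some $\epsilon'>0$ (any $\epsilon'<\epsilon$ suffices). Since the first player is polynomial-time bounded and the second is unbounded, this is a valid oracle communication protocol for \VC.

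Finally, I would invoke the result of \citet{DM14} stating that \VC does not admit an oracle communication protocol of cost~$O(k^{2-\epsilon'})$ unless $\coNPinNPpoly$, yielding the desired contradiction. I do not anticipate a genuine obstacle here; the only point deserving care is making sure that the reduction from \VC to \GPS in \Cref{thm:dual_size_lower_bound} is indeed polynomial-time and has the dual parameter equal to~$k$ (so that the $O(h^{2-\epsilon})$ bound on points translates to $O(k^{2-\epsilon})$), and that the extra $\log$-factor from the protocol can be absorbed into an arbitrarily small decrease of the exponent. Both follow directly from the statements already proved.
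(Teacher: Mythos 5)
Your proposal is correct and follows essentially the same route as the paper: compose the \VC-to-\GPS reduction from \Cref{thm:dual_size_lower_bound} (where the dual parameter equals the vertex cover size) with the hypothetical point-reduction algorithm and the $O(n\log n)$-cost protocol of \Cref{lem:oracle}, absorbing the logarithmic factor into a slightly smaller exponent, and then invoke the lower bound of \citet{DM14} for \VC. No substantive differences from the paper's argument.
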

\begin{proof}
  Assuming that such an algorithm exists, the oracle communication protocol of
  \Cref{lem:oracle} has cost~$O(h^{2-\epsilon'})$ for some~$\epsilon'>0$.
  Since the reduction from \VC in \Cref{thm:dual_size_lower_bound}
  outputs a \GPS instance where the dual parameter~$h$ equals the size~$k$ of the vertex cover sought,
  we obtain a communication protocol for \VC~of cost~$O(k^{2-\epsilon'})$, which implies that
  $\coNPinNPpoly$ \cite[Theorem~2]{DM14}.
\end{proof}

\paragraph*{Remark on the Kernel Lower Bound Framework of Kratsch, Philip
  and~Ray}
As~a final observation, we mention that the framework of~\citet{KPR16} indeed is more generally applicable than stated there.
It only relies on the equivalence of instances with respect to order types of point
sets. Hence, we observe that
for every decision problem on point sets for which
\begin{compactenum}
  \item two instances with combinatorially equivalent point sets are
    equivalent (cf. \Cref{lem:combinatorial_equivalent}), and
  \item there is no oracle communication protocol of
      cost~$O(k^{2-\epsilon})$ for some parameter~$k$ and any~$\epsilon>0$
      unless $\coNPinNPpoly$,
\end{compactenum}
there is no problem kernel with~$O(k^{2-\epsilon'})$ points
for any~$\epsilon'>0$ unless $\coNPinNPpoly$.

\section{Conclusion and Outlook}

The intent of our work is to stimulate further research on the computational complexity of \GPS.
The kernelization results we presented rely mostly on combinatorial arguments; the main geometric property we used is
that two distinct lines intersect in at most one point.
Therefore, a natural question to ask is whether there are further geometric
properties that can be exploited in order to obtain improved algorithmic results for \GPS.
We conclude with the following concrete open questions:
\begin{enumerate}
	\item Can the~($15k^3$)-point kernel (\Cref{thm:cubic-kernel}) for \GPS be asymptotically improved?
		Or can we derive a cubic, or even a quadratic, lower bound on the (point) kernel size of \GPS?
	\item Can the FPT algorithm (see \Cref{cor:fptalgo}) for \GPS be (significantly) improved?
	\item With respect to polynomial-time approximation, we could only show the APX-hardness of \optGPS.
	It remains open whether \citeauthor{Cao12}'s $O(\sqrt{\opt})$-factor approximation can be improved.
\end{enumerate}

\bibliography{ref}
\bibliographystyle{abbrvnat}

\end{document}